\documentclass[11pt]{article}

\usepackage[margin=1in]{geometry}
\usepackage{authblk} 
\usepackage{amsmath,amssymb,amsthm,bm}
\usepackage{enumitem}
\usepackage[round]{natbib}
\usepackage{booktabs} 
\usepackage{subcaption}
\usepackage{graphicx} 
\usepackage{hyperref}
\hypersetup{colorlinks=true,citecolor=blue,linkcolor=blue,urlcolor=blue}

\newtheorem{theorem}{Theorem}
\newtheorem{definition}{Definition}

\theoremstyle{remark}
\newtheorem{remark}{Remark}
\theoremstyle{definition}
\newtheorem{assumption}{Assumption}

\newcommand{\E}{\mathbb{E}}

\newcommand{\dhat}{\widehat{\delta}}

\newcommand{\Var}{\operatorname{Var}}

\newcommand{\Kcal}{\mathcal{K}}

\usepackage[plain,noend]{algorithm2e}

\title{\bf De-biased Distribution Estimation for Locally Differential Private Survival Data \\
}

\author[1,*]{
Yi Xiong}
\author[2]{Li-Pang Chen}
\author[3]{Liyue Fan}
\affil[1]{Department of Biostatistics, The State University of New York (SUNY) at Buffalo}

\affil[2]{Department of Statistics, National Chengchi University}

\affil[3]{Department of Computer Science, University of North Carolina at Charlotte}
 \affil[*]{Corresponding author: \texttt{yxiong3@buffalo.edu}}
 \date{}
\begin{document}

\maketitle
\begin{abstract}
Sharing time-to-event data can leak sensitive information about individual participants in clinical studies. To protect the privacy of data, differential privacy (DP) framework offers rigorous protection for data sharing. The local variant of differential privacy is particularly appealing because each record is perturbed before it leaves the data owner or contributor. However, the injected perturbation distorts the true distribution of the source data, yielding low utility and invalid statistical inference with the released data. In this work, we propose a nonparametric estimator for the true survivor function via the deconvolution kernel function with locally differential private (LDP) right-censored survival data in which both the event times and the event indicators are privatized. Our estimator simultaneously de-biases the perturbation applied to the times and the errors induced by flipping the event indicators due to LDP. Theoretically, we establish the uniform consistency and point-wise asymptotic normality of the de-biased distribution estimator, and derive its uniform convergence rate together with the optimal bandwidth attaining that rate. Simulation studies and the analysis of a semi-synthetic dataset generated from the METABRIC breast cancer cohort demonstrate the performance of the proposed estimator across privacy budgets, censoring levels, and sample sizes. By the post-processing property of DP, the proposed de-biased estimator inherits the privacy guarantee while restoring valid inference on the survivor function.

\medskip
\noindent\textbf{Keywords:} censoring; deconvolution; differential privacy; kernel density estimation; randomized response; survival analysis.
\end{abstract}

\section{Introduction}
\label{sec:intro}


Time-to-event data, such as times to disease diagnosis, hospitalization, or
death, is among the most sensitive records held by registries, electronic
health systems, and cohort studies. Because releasing individual-level data on the exact event times is
risky, it is common to share the estimated survival curves, which is usually obtained by the Kaplan--Meier
estimator in registries, trial
reports, and published articles. Such a release, however, does not guarantee privacy protection. The Kaplan--Meier estimator is a step function whose jumps occur
exactly at the observed event times and are determined by the
number of subjects at risk, so a published curve, together with the at-risk
counts are likely to disclose individual event times and to expose participants to membership inference (e.g., \citealp[]{Bonomi2020}).

To protect individual data privacy while still enabling data sharing, \emph{differential
privacy} (DP, \citealp{dwork2006calibrating}) has become the standard for quantifiable protection. In its central mode, a trusted curator
holds the original records and releases only privatized summaries computed from
them. The guarantee then attaches to what is released. As the original individual-level data are stored in a centralized repository, security breaches may occur at the repository, inflicting data leak and thus loss of privacy. For example, in June 2026, an
experimental Artificial Intelligence (AI) agent with internet access was gathering public statistics on
medical cost from a public-facing portal operated by Australia's Medicare
system\citep{NatureMedicare2026}. Implementing protection in the local differential privacy mode provides stronger privacy protection in the presence of potential security attacks on such a central repository. In the local mode of differential privacy, each
individual's record is randomized before it leaves their control
\citep{kasiviswanathan2011what,duchi2018minimax}, so the data curator never sees
the truth. Local differential privacy (LDP) is attractive because it
makes a worst-case guarantee about every released record and this holds
regardless of what later happens to the repository. In the framework of survival analysis, LDP aims to protects survival data by adding perturbation to time-to-event outcomes or censoring status. This mathematically guarantees that an attacker cannot tell if the exact event time and event indicator of an inidividual was in the dataset. 

Existing work on privacy-preserving survival analysis has adopted both central and local DP models. \citet{Bonomi2020} and \citet{FanBonomi2023} added perturbation to
summary statistics, for instance the counts of events at given time points or
the gradients used to train a deep learning model, and \citet{Gondara2020}
obtained a differentially private Kaplan--Meier estimator of the survival
function together with confidence intervals. To protect the individual-level
data, \citet{bonomi2022sharing} presented a data-sharing mechanism that
randomizes each event time within a truncated time window. \cite{Maxime2025} considered an LDP mechanism that only privatizes the event indicators and releasing the observed times unperturbed,
under which a kernel estimator of the cumulative hazard remains consistent. However, the observed times are themselves sensitive and remain unprotected in their mechanism. As \citet{bonomi2022sharing} pointed out, randomizing event times only within a bounded window offers weaker protection than a local DP mechanism, which
perturbs each time-to-event record and randomizes each event indicator. On
the other hand, they also noted that the guarantee of local DP is achieved with randomized mechanisms, such as added perturbation, which may lower the utility of the released data and
distorts their true distribution.

Fortunately, this distortion is not arbitrary. Because the LDP perturbation is
generated by a published mechanism, its distribution is known exactly, so
privatized data are not merely noisier data but data contaminated by an error of
known law. The bias it induces can therefore be removed by an appropriate
post-processing step, at the cost of additional variance rather than validity.
Moreover, by the post-processing property of DP, such a correction consumes no
additional privacy budget. This observation motivates our work. Indeed, the released survival data can be viewed as error-contaminated versions
of the original data, which connects our problem to the measurement error
literature (e.g., \citealp{carroll2006measurement}). For time-to-event data,
correction methods in this literature have been developed predominantly for
regression analysis. For example, \citet{Oh2018} and \citet{Oh2021} extended
simulation and extrapolation (SIMEX) and regression calibration to address
time-to-event error in the parametric accelerated failure time (AFT) model, and
\citet{Chen2024} further considered simultaneous time-to-event error and flipped
censoring status, proposing a boosting method for variable selection and
estimation in the AFT model with nonlinear covariates. These corrections,
however, do not carry over to the survivor function and neither of SIMEX nor regression calibration can fully correct the biases that vanish as the sample size grows. Beyond regression, nonparametric estimation of the marginal survivor function
from error-contaminated time-to-event data, which is a common objective in
survival analysis, has received little attention. \citet{Zee2015} considered
discrete survival times and proposed a nonparametric likelihood method under the
structural measurement error model. \citet{Comte2017} combined a Fourier-based
deconvolution with B-spline regression to estimate the hazard function, and 
most recently \citet{Edwards2026} extended the Rogan--Gladen estimator to
correct an estimated survivor function for error-prone time-to-event variables. However, to our knowledge, no existing work addresses survival data in which  the event times carry measurement error and the indicators carry misclassification error simultaneously, which is the situation created by the local differential privatization on both the times and the indicators.

In this paper, we develop a nonparametric framework for estimating the survivor
function from right-censored data released under local differential privacy, in
which both the event times and the event indicators are privatized. Rather
than calibrating individual observations, our framework corrects the
perturbation-induced bias at the level of the distribution, using deconvolution
kernels for the times and moment reconstruction for the event indicators. Deconvolution kernel estimation dates back to \citet{stefanski1990deconvolving},
who recovered the density of a variable observed with additive error of known
distribution, and \citet{fan1991optimal} established its optimal rates of
convergence. More recently, \citet{farokhi2020deconvoluting} brought this
technique to locally differentially private data, using deconvolution kernels
for density estimation and regression. These methods, however, concern
fully observed data instead of censored data. Local
differential privacy differs fundamentally from conventional measurement error
in that the distortion is introduced deliberately, by a mechanism designed to
satisfy a privacy constraint. Consequently, the distribution of the perturbation is known and correcting the biases will not essentially require any validation data. To
account for censoring, we propose a Nelson--Aalen type estimator built
from deconvolution kernel estimators of the event sub-density and the at-risk function.
Because it works through the hazard, the estimator requires neither a model for
the survival times nor knowledge or estimation of the censoring distribution.
As the framework is built completely on the privatized data itself, it is a post-processing
of the released data and inherits their LDP guarantee
\citep[Proposition~2.1]{dwork2014algorithmic}. We derive uniform
convergence rate that depends explicitly on the privacy budget and establish the uniform consistency together with
the point-wise normality of the proposed estimator asymptotically. We also justify the validity of
a multiplier bootstrap procedure for variance estimation. Since the convergence rate depends on the sample size and the privacy budget, it makes the statistical cost of local privacy explicit. It can further guide the design of data-sharing protocols on how the sample size are required to scale with the privacy budget to attain that accuracy of proposed estimator.


The remainder is organized as follows. In Section~\ref{sec:setup}, we introduce the basic setup including the privatization mechanism using local differential privacy on survival data subject to right censoring, and we present the proposed distribution estimator with the LDP survival data and the asymptotic properties of the proposed estimator. In Section~\ref{sec:simulation}, we conduct simulation studies to assess the performance of the proposed estimator. In Section~\ref{sec:rda}, we apply the proposed method to a semi-synthetic dataset generated from a publicly available breast cancer dataset. Finally,  concluding remarks are provided in Section~\ref{sec:summary}.

\section{Methodology}
\label{sec:notation} \label{sec:setup}

\subsection{Preliminary on Differential Privacy on Survival Data}

Consider a study with $n$ individuals. Let $T$ and $C$ denote the event time and the censoring time, and we assume that $T$ is independent of $C$ (denoted $C\perp T$). We consider the scenario where the event times are subject to right censoring throughout the paper. Without privacy protection, the observed data are $\mathcal{O}=\{U_{i},\delta_{i}\}_{i=1}^{n}$, where $T_i$ is subject to right censoring,  $U_{i}=\min(T_{i},C_{i})$ and $\delta_{i}=I(T_{i}\le C_{i})$. Define $G_{C}(t)=\Pr(C\ge t)>0$ and $S_{T}(t)=\Pr(T\ge t)$ as the survivor functions of $C$ and $T$, respectively. We are interested in estimating $S_T(t)$ for $t$ in a bounded domain $\mathcal{T}$.

In clinical studies, not only the observed data $\{U_{i},\delta_{i}\}_{i=1}^{n}$ carry high re-identification risk but also the released survivor curve. Differential privacy (DP)
protects against this by randomizing the release mechanism, so that the output
would have been almost as likely had the record been different. The privacy
budget $\epsilon>0$ quantifies the guarantee, with smaller $\epsilon$ meaning
stronger privacy. The original definition for $\epsilon$-differential privacy follows \citep{dwork2006calibrating} is given below.
\begin{definition}[$\epsilon$-differential privacy]
\label{def:dp}
Call two datasets $D,D'$ neighbouring if they differ in one record. A
randomized mechanism $M$ satisfies $\epsilon$-differential privacy if for all
neighbouring $D,D'$ and all measurable sets $A$,
\begin{equation}
  \Pr\{M(D)\in A\}\le {\exp(\epsilon)}\,\Pr\{M(D')\in A\}.
  \label{eq:dp}
\end{equation}
\end{definition}
 
Definition~\ref{def:dp} presumes that the curator can be trusted with the raw data, that is, the raw records themselves remain exposed to the trusted curator and to anyone who compromises it. When no trusted
curator or secure centralized infrastructure is available, this protection is
insufficient. This necessitates the local differential privacy (LDP) framework \citep{kasiviswanathan2011what, duchi2018minimax}, wherein each record is privatized prior to data and no curator ever
holds the true observations $\mathcal{O}$. Formally, the constraint \eqref{eq:dp} is imposed on each
single record, with the two neighbouring datasets replaced by two possible values
of that record.
\begin{definition}[$\epsilon$-local differential privacy]
\label{def:ldp}
Let $W$ denote one individual's record and $Q(\cdot\mid w)$ the conditional probability
of the release $Z$ given $W=w$. The mechanism $Q$ satisfies $\epsilon$-LDP if,
for every measurable set $A$ and every pair of values $w,w'$,
\begin{equation}
  Q(Z\in A\mid W=w)\;\le\;\exp(\epsilon)\,Q(Z\in A\mid W=w').
  \label{eq:ldp}
\end{equation}
When the release has a density $q(\cdot\mid w)$, this is equivalent to
${\exp(-\epsilon)}\le q(z\mid w)/q(z\mid w')\le {\exp(\epsilon)}$ for almost every $z$.
\end{definition}
Local privacy is much more statistically demanding. The central DP in Definition~\ref{def:dp} perturbs an aggregate dataset with sample size $n$ and its effect is simply $O(1/n)$ and thus the convergence rates of any statistical estimators with such an aggregate data are usually unaffected. On the contrary, the local model perturbs each record so it convolves the entire distribution of the data. 

Motivated by Definitions~\ref{def:ldp}, we consider a data releasing setting where each individual releases a privatized version of {$(U_{i},\delta_{i})$}. The
observed time is protected on the log scale, $Y_{i}=\log U_{i}$. %
Privatize $Y_i$ is achieved by adding perturbation as follows: 
\begin{equation}
  Y_{i}^{*}=Y_{i}+e_{i}
  \quad \text{with} \quad
  e_{i}\stackrel{\mathrm{iid}}{\sim}f_e(\cdot)
  \quad \text{and} \quad 
  e_{i}\perp(T_{i},C_{i},X_{i},\delta_{i}),
  \label{eq:perturb-mech}
\end{equation}
where $f_{e}(\cdot)$ is the distribution of perturbation {and `iid' indicates independent and identially distributed}. When $f_e(\cdot)$ is specified as the Laplace distribution with mean zero, we parameterize $f_e(\cdot)$ to be $f_e(e) = \frac{1}{2b} \exp\left(-\frac{\vert{}e\vert{}}{b}\right)$, where {$b \triangleq \frac{\Delta}{\epsilon_{\text{time}}} > 0$} is the scale parameter with $\Delta$ being the sensitivity that indicates the maximum possible difference between any two individual inputs and $\epsilon_{\text{time}}$ is the privacy budget. We assume that $Y_i\in [a_{1},a_{2}]$ and thus $\Delta<\infty$ and $b<\infty$. In practice, $a_{2}=\log\tau$ where $\tau$ is the administrative end of follow-up and $a_{1}=\log t_0$ where $t_0$ can be the resolution of event times at which event times are recorded. By doing this, $\{\{Y^*_i,\delta_i\},i=1,\cdots,n\}$ satisfy $\epsilon_{\mathrm{time}}$-differential privacy because the log-likelihood ratio
$\{|Y^*_i-y'|-|Y^*_i-y|\}/b$ is bounded by $\Delta/b$ \citep{dwork2006calibrating}. Define $\epsilon_{\mathrm{time}}=\Delta/b$ and we call the release of observed event time is
$\epsilon_{\mathrm{time}}$-LDP.

In addition to $Y_i^\ast$ in (\ref{eq:perturb-mech}), the event indicator is privatized by the method of randomized response
\citep{warner1965randomized}. We obtain the privatized event indicator $\delta^*_i$ by randomizing/flipping $\delta_i$ with probability $p_l=\Pr(\delta^{*}_{i}=1\mid\delta_{i}=l)$ for
$l\in\{0,1\}$, with $p_{1}\neq p_{0}$. We can build the connection between the privacy budget and the randomizing probability $p_l$ using the bound of the two ratios $\max\left\{\left|\log\tfrac{p_{1}}{p_{0}}\right|,\
  \left|\log\tfrac{1-p_{1}}{1-p_{0}}\right|\right\}$.
The two ratios $\left|\log\tfrac{p_{1}}{p_{0}}\right|,\
  \left|\log\tfrac{1-p_{1}}{1-p_{0}}\right|$ represent the log-likelihood ratios defined in \eqref{eq:ldp} corresponding to $\delta^*_i=1$ and $0$. A common choice for $p_1, p_0$ is $p_1 = \frac{\exp(\epsilon_{\mathrm{RR}})}{\exp(\epsilon_{\mathrm{RR}})+1}$ and $
p_0 = \frac{1}{\exp(\epsilon_{\mathrm{RR}})+1}$ so that $p_1+p_0=1$ and the two ratios are identical. Therefore, we have that $p_{1}-p_{0}=\tanh(\epsilon_{\mathrm{RR}}/2)$. Since the randomized responses $\delta^*_i$ are generated independently of the perturbations
$e_{i}$ and of $(T_{i},C_{i})$, by the composition
property of differential privacy, releasing the two quantities
$(Y^{*}_{i},\delta^{*}_{i})$ about the same record therefore satisfies
$\epsilon$-LDP with
$\epsilon=\epsilon_{\mathrm{time}}+\epsilon_{\mathrm{RR}}$.

While this perturbation mechanism rigorously guarantees the desired local privacy protection, the injected perturbations inevitably corrupt the underlying data distribution. Consequently, standard survival analysis procedures applied directly to the privatized records $\mathcal{O^*} \triangleq \{Y_i^*, \delta_i^*\}_{i=1}^n$ yields severe bias. For instance, a naive estimator for the survivor function is defined as 
below,{
\begin{equation}
  \widehat{S}_{\mathrm{naive}}(t)
  =\prod_{u\le t}\left\{1-\frac{dN^{*}(u)}{R^{*}(u)}\right\}
  \quad \text{with} \quad
  N^{*}(u)=\sum_{i=1}^{n}\delta^{*}_{i}I({U^{*}_{i}\le u})
  \quad \text{and} \quad
  R^{*}(u)=\sum_{i=1}^{n}I(U^{*}_{i}\ge u).
  \label{eq:naive}
\end{equation}}
This naive estimator is the Kaplan--Meier estimator applied to
$\mathcal{O^*}$ as though the released $U^{*}_{i}=\exp(Y^{*}_{i})$ and
$\delta^{*}_{i}$ were the true observed time and event indicator. It is biased because the true event indicators are diluted by censored records that were wrongly
flagged as events and the independent perturbation $e$ distorts the distribution of the event times. Therefore, estimating $S_{T}(\cdot)$ and valid inference require recovering both event times and event indicators. In the subsequent section, we will describe our proposed method to recover $S_T(\cdot)$ with the LDP released data $\mathcal{O^*}$.


\subsection{Proposed De-biased Estimation of Survivor Functions}
\label{sec:estimator}

We begin our description to identify the target survivor function without privacy protection on the data. With non-privatized data, we define the event sub-density on the log scale as follows:
\begin{equation*}
  g_{1}(v)\triangleq\frac{d}{dv}\Pr\!\left(Y\le v,\ \delta=1\right).
\end{equation*}
Under {the independence assumption} $C\perp T$, the event sub-density and the survivor function of $Y$ are factorized as
\begin{equation*}
  g_{1}(v)=f_{\log T}(v)\,G_{C}(\exp(v))
  \quad \text{and} \quad
   H_{Y}(v)=\Pr(Y\ge v)=S_{T}(\exp(v))\,G_{C}(\exp(v)),
  \label{eq:twofac}
\end{equation*}
where $f_{\log T}(\cdot)$ is the density function of log-transformed event times $\log T$. If the censoring distribution $G_C(\cdot)$ is known, $S_T(t)$ is identified via an inverse-probability-weighting (IPW) representation:
\begin{align}
\label{eq:SurvC-NonPrivate}
S_T(t)=1-F_{T}(t)=1-\int_{-\infty}^{\log t}g_{1}(v)/{G}_{C}(\exp(v))\,dv.\end{align}
When the censoring distribution is unknown, we observe that the ratio $\frac{g_{1}(v)}{H_{Y}(v)}$ is identical to the hazard function of $\log T$. This leads to the Nelson--Aalen type representation for $S_T(t)$:
\begin{align}
\label{eq:SurvNA-NonPrivate}
    S_{T}(t)=\exp\left\{-\int_{-\infty}^{\log t}
  \frac{g_{1}(v)}{H_{Y}(v)}\,dv\right\}.
\end{align}
We then discuss how to identify $S_T(t)$ with the LDP data $\mathcal{O^*}$. To estimate $S_T(t)$ with either \eqref{eq:SurvC-NonPrivate} or \eqref{eq:SurvNA-NonPrivate} under LDP, we must first estimate $g_1(v)$ from privatized data $\mathcal{O^*}$.  

The path from the released data to recover $g_1(\cdot)$ runs through
deconvolution. We assume $f_{\log T}(\cdot)$ and $G_{C}(\cdot)$ are $r\,(\ge2)$ times continuously
differentiable with bounded derivatives on the horizon, so $g_{1}(\cdot)$ inherits the same smoothness. For {any two integrable density functions $f(\cdot)$ and $g(\cdot)$ corresponding to two independent random variables}, the convolution is defined as follows,
\[
  (f*g)(v)=\int_{-\infty}^{\infty}f(y)\,g(v-y)\,dy.
  \label{eq:convdef}
\]
Given $e\perp(Y,\delta)$, the distribution of privatized $Y^*$ can be viewed as the convolution of $Y$ and $e$. We define $g^*_1(v)=\frac{d}{dv}\Pr(Y\leq v, \delta=1)$ and thus $g^*_1(v)=(g_{1}*f_{e})(v)$. Now the target is to de-convolute $g_1(\cdot)$ from $(g_{1}*f_{e})(v)$. Following \citet{stefanski1990deconvolving}, let $\varphi_K(\cdot)$ and ${\varphi_e}(\cdot)$ denote the characteristic functions of a standard smoothing kernel $K(\cdot)$ and the perturbation distribution $f_{e}(\cdot)$, respectively. Since the distribution of perturbation, $f_e(\cdot)$, is generally known when releasing the data, given an estimator $\widehat{\varphi}_K(u)$ for $\varphi_{Y^*}(u)$ with a kernel function $K(\cdot)$ with bandwidth $h>0$, we have that $\widehat{\varphi}_Y(u)=\frac{\widehat{\varphi}_K(u)}{\varphi_e(u/h)}$ and the deconvoluting kernel function for the density function of $Y$ is
\begin{equation}
\mathcal{K}_h(z) = \frac{1}{2\pi h} \int_{-\infty}^{\infty} e^{-\mathrm{i} t z / h} \frac{\varphi_K(t)}{\varphi_e(t / h)} , dt.
\label{eq:deconv-kernel}
\end{equation}
For Laplace$(0, b)$ perturbation, ${\varphi_e}(t) = (1 + b^2 t^2)^{-1}$, and \eqref{eq:deconv-kernel} simplifies analytically to $\mathcal{K}_h(z) = K_h(z) - b^2 K_h''(z)$, where $K_h(z) = h^{-1} K(z/h)$. When paired with a Gaussian base kernel $K(z) = \Phi'(z)$ {with $\Phi(z)$ being the distribution function of the standard normal distribution}, the integral $\overline{\mathcal{K}}_h(z) = \int_{-\infty}^z \mathcal{K}_h(u) du = \Phi(z/h) - (b^2/h^2) K'(z/h)$ is also available in closed form.
When the event indicators are fully available and not flipped, we can recover and estimate $g_1(\cdot)$ using this deconvoluting kernel function by
\begin{align}
\label{eq:tildeg1}
    \widetilde g_{1}(v)=\frac1n\sum_{i=1}^{n}\delta_{i}\,\Kcal_{h}(v-Y^{*}_{i}).
\end{align}
With flipped event indicators, we need to correct the bias. Noting that $\E[\delta^{*}\mid\delta]=p_{0}+(p_{1}-p_{0})\delta$, we propose to reconstruct the uniquely unbiased event indicator by $\dhat_{i}=\frac{\delta^{*}_{i}-p_{0}}{p_{1}-p_{0}}$. By substituting the event indicator $\delta_i$ in \eqref{eq:tildeg1} with the reconstructed $\dhat_{i}$, the proposed estimator for $g_1(v)$ is
\begin{equation}
  \widehat g_{1}(v)=\frac1n\sum_{i=1}^{n}\dhat_{i}\,\Kcal_{h}(v-Y^{*}_{i}).
  \label{eq:ghat}
\end{equation}

When censoring distribution is known, we can plug $\widehat{g_1}(\cdot)$ into \eqref{eq:SurvC-NonPrivate} and obtain an IPW type estimator for the survivor function: 

\begin{equation}\widehat{S}_{\mathrm{IPW}}(t) = 1 - \int_{-\infty}^{\log t} \frac{\widehat{g}_1(v)}{G_C(\exp(v))}dv.\label{eq:SurvIPW-Proposed}\end{equation}In practice, however, the underlying distribution of censoring time is rarely known prior to analysis and thus $G_C(\cdot)$ is not known. With locally private data, $G_C(\cdot)$ is moreover challenging to estimate because it requires a separate deconvolution step to recover the distribution of censoring times. To circumvent the need to model or estimate $G_C(\cdot)$, we adopt the hazard-based formulation in \eqref{eq:SurvNA-NonPrivate} to develop a debiased Nelson–Aalen type estimator.

Following the representation in \eqref{eq:SurvNA-NonPrivate}, we need to estimate $H_Y(v)$ with the LDP data in addition to the event sub-density $g_1(v)$. Using the analogous deconvolution technique for $\widehat{g_1}(v)$, we propose the following smoothed estimator for $H_Y(v)$:
\begin{equation}\widehat{H}_Y(v) = \frac{1}{n} \sum_{i=1}^{n} \int_{v}^{\infty} \mathcal{K}_h(w - Y^{*}_{i})dw.\label{eq:Hhat}\end{equation}
We note that $\widehat H_{Y}(v)$ uses all records irrespective of censoring status and
needs neither the indicator nor its reconstruction. By plugging $\widehat{g}_1(v)$ from \eqref{eq:ghat} and $\widehat{H}_Y(v)$ from \eqref{eq:Hhat} into \eqref{eq:SurvNA-NonPrivate}, we obtain our proposed debiased Nelson–Aalen type estimator for the survivor function 
\begin{equation}
  \widehat S_{\mathrm{NA}}(t)
  =\exp\left\{-\int_{-\infty}^{\log t}
  \frac{\widehat g_{1}(v)}{\widehat H_{Y}(v)}\,dv\right\}.
  \label{eq:SNAhat}
\end{equation}

By the post-processing
property of differential privacy, both $\widehat S_{\mathrm{NA}}(\cdot)$ and $\widehat S_{\mathrm{IPW}}(\cdot)$ inherit the LDP guarantee because they are computed from the privatized data $\mathcal{O}^*$. The estimator $\widehat S_{\mathrm{NA}}$ has advantages that it requires neither specification nor estimation of the distribution of censoring times. Unlike $\widehat{S}_{\mathrm{IPW}}(\cdot)$, this proposed estimator is naturally valued in $(0,1]$ without any additional transformation. Moreover, it is
non-increasing wherever $\widehat g_{1}\ge0$. On the other hand, evaluating the integral in \eqref{eq:SNAhat} for $v$ close to
$\log\tau$ requires cautious, because the estimated survival probability
$\widehat H_{Y}(v)$ in the denominator approaches zero and can fluctuate below zero due to kernel deconvolution noise. To ensure finite-sample stability and avoid division by zero, we restrict evaluation to $t \le \tau'$, where $\tau' < \infty$ satisfies $H_Y(\log \tau') \ge \eta > 0$ for a fixed threshold $\eta$. In this way, the threshold trims the denominator appropriately and makes $\widehat{H}_Y(v)$ distinguishable from zero. 

We estimate the point-wise variance and construct confidence bands for $\widehat{S}_{\mathrm{NA}}(t)$ using a multiplier bootstrap procedure. In the next section, we establish the asymptotic properties of $\widehat{S}_{\mathrm{NA}}(t)$ and the consistency of bootstrap variance estimator of $\widehat{S}_{\mathrm{NA}}(t)$. 
\subsection{Asymptotic Properties}
\label{sec:theory}
We establish asymptotic properties of the proposed estimator 
$\hat{S}_{\mathrm{NA}}(\cdot)$ under the following assumptions.
\begin{assumption}\label{as:ancillary}
The perturbations $e_{i}$ and the randomized/flipped indicators are independent across $i$, and $e_{i}\perp(T_{i},C_{i},\delta_{i})$
 for each $i$. 
\end{assumption}

\begin{assumption}\label{as:noise}
The characteristic function $\varphi_{e}$ for the perturbation never vanishes and
$|\varphi_{e}(u)|\asymp c_{e}|u|^{-\beta}$ as $|u|\to\infty$; for a smoothness
index $\beta\ge1$ and $c_{e}=\lim_{|u|\to\infty}|u|^{\beta}|\varphi_{e}(u)|>0$.
\end{assumption}

\begin{assumption}\label{as:trunc}
The bound $[a_1,a_2]=[\log t_0,\log\tau]$ for privatization is selected independent of the data. The evaluation domain
$\mathcal{T}=[t_{0}',\tau']$ satisfies $t_{0}<t_{0}'<\tau'<\tau$, so that
$\mathcal{L}=\{\log t:t\in\mathcal{T}\}=[\log t_{0}',\log\tau']\subset(a_{1},a_{2})$,
and $H_{Y}(\log\tau')\ge\eta>0$.

\end{assumption}

\begin{assumption}\label{as:smooth}
The $g_{1}$ and $H_{Y}$ are $r$ times continuously
differentiable, $r\ge2$, with derivatives bounded uniformly on an open
neighbourhood of $\mathcal{L}$. In addition the
density of $Y^{*}$ is bounded on $\mathbb{R}$.
\end{assumption}

\begin{assumption}\label{as:kernel}
The kernel function $K$ used to derive the deconvolution kernel $\mathcal{K}_h(\cdot)$ in \eqref{eq:deconv-kernel} satisfies:
(i) $K$ is symmetric with $\mu_{0}(K)=1$ and of order $r\ge2$, that is
$\mu_{j}(K)=0$ for $1\le j<r$ and $\mu_{r}(K)\neq0$, where
$\mu_{j}(K)=\int u^{j}K(u)\,du$, and $\int|u|^{r}|K(u)|\,du<\infty$;
(ii) $K$ is $m$ times continuously differentiable for an integer
$m\ge\lceil\beta\rceil$, with $\int (K^{(j)})^{2}(u)\,du<\infty$ for
$0\le j\le m$;
(iii) $K$ and its first $m$ derivatives are of bounded variation.
\end{assumption}

\begin{assumption}\label{as:bw}
As $n\to\infty$, $h\to0$ and
\[
  \frac{n\,h^{2\beta-1}c_{e}^{2}}{\log n}\longrightarrow\infty .
\]
\end{assumption}

Assumption~\ref{as:noise} is the standard notion of \emph{ordinary smoothness} of order $\beta$ \citep{fan1991optimal}, which requires the
characteristic function of the perturbation to decay at a polynomial rate, so
that dividing by it in \eqref{eq:deconv-kernel} amplifies high frequencies only polynomially. Assumption~\ref{as:trunc} ensures the LDP guarantee with Laplace perturbation and 
restricts estimation of $H_{Y}(\cdot)$ to a pre-specified range on 
$H_{Y}(\cdot)$ is bounded away from zero. Assumption~\ref{as:smooth} requires $g_{1}(\cdot)$ and $H_{Y}(\cdot)$ to be smooth enough. Assumption~\ref{as:kernel} collects the usual conditions on the kernel in
deconvolution estimation \citep{stefanski1990deconvolving,fan1991optimal} to
ensure that $\mathcal{K}_{h}$ is well defined and its integral are uniformly bounded. Assumption~\ref{as:bw} allows the bandwidth to shrink so that the smoothing bias
vanishes and each kernel window retains sufficient
information so that we can asymptotically achieve a consistent estimator. The
Laplace mechanism satisfies Assumption~\ref{as:noise} with $\beta=2$ and $c_{e}=b^{-2}$, since
$\varphi_{e}(u)=(1+b^{2}u^{2})^{-1}\asymp b^{-2}|u|^{-2}$ and a Gaussian kernel satisfies Assumption~\ref{as:kernel} with $r=2$.

The proposed estimator $\hat{S}_{\mathrm{NA}}(\cdot)$ in \eqref{eq:SNAhat} thus has the following asymptotic properties.
\begin{theorem}[Uniform Consistency]
\label{thm:main}

Under the LDP releasing mechanism \eqref{eq:perturb-mech} and
Assumptions~\ref{as:ancillary}--\ref{as:bw},
\begin{equation}
  \sup_{t\in\mathcal{T}}\bigl|\widehat S_{\mathrm{NA}}(t)-S_{T}(t)\bigr|
  =O_{p}(\rho_{n})
  \quad \text{with} \quad
  \rho_{n}=h^{r}+\Bigl(\frac{\log n}{c_{e}^{2}\,n\,h^{2\beta-1}}\Bigr)^{1/2}.
  \label{eq:unif}
\end{equation}
For the Laplace mechanism, $\beta=2$, $c_{e}=b^{-2}$,
$\rho_{n}=h^{r}+\{b^{4}\log n/(nh^{3})\}^{1/2}$. 
Balancing the two terms of
$\rho_{n}$ gives
\begin{equation}
  h\asymp\Bigl(\frac{b^4\log n}{n}\Bigr)^{\frac{1}{2r+3}}
  \quad \text{and} \quad
  \sup_{t\in\mathcal{T}}\bigl|\widehat S_{\mathrm{NA}}(t)-S_{T}(t)\bigr|
  =O_{p}\Bigl\{\Bigl(\frac{b^4\log n}{n}\Bigr)^{\frac{r}{2r+3}}\Bigr\}.
  \label{eq:optrate}
\end{equation}
In particular, under a fixed privacy budget $\epsilon_{\mathrm{time}}>0$, so
that $b=\Delta/\epsilon_{\mathrm{time}}$ is constant in $n$, $\rho_{n}\to0$ and
the estimator is uniformly consistent,
\begin{equation}
  \sup_{t\in\mathcal{T}}\bigl|\widehat S_{\mathrm{NA}}(t)-S_{T}(t)\bigr|
  \overset{p}{\longrightarrow}0.
  \label{eq:S_NA_op1}
\end{equation}

\end{theorem}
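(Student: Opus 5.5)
Our approach is to reduce the statement to uniform rates for the two deconvolution estimators $\widehat g_1$ and $\widehat H_Y$ on a neighbourhood of $\mathcal{L}$. We then propagate these rates through the smooth map $(g_1,H_Y)\mapsto \exp\{-\int g_1/H_Y\}$.

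\textbf{Step 1 (unbiasedness and bias).} By Assumption~\ref{as:ancillary}, $\E[\dhat_i\mid Y_i,\delta_i,e_i]=\delta_i$, so $\E\widehat g_1(v)=\E[\delta\,\Kcal_h(v-Y^*)]$. The defining property of \eqref{eq:deconv-kernel} is $\E[\Kcal_h(v-Y-e)\mid Y]=K_h(v-Y)$; it follows by Fourier inversion since $\varphi_{\Kcal_h}(t)\varphi_e(t)=\varphi_K(ht)$. Hence $\E\widehat g_1(v)=(K_h*g_1)(v)$ and $\E\widehat H_Y(v)=\int_v^\infty (K_h*f_Y)=(\overline K_h * H_Y)(v)$. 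A Taylor expansion of order $r$, using Assumptions~\ref{as:smooth} and \ref{as:kernel}(i), gives $\sup_{v\in\mathcal{L}}|\E\widehat g_1-g_1|+|\E\widehat H_Y-H_Y|=O(h^r)$. For $g_1$ the integrand is smooth only on a neighbourhood of $\mathcal{L}$, so we treat the kernel tail using $\int|u|^r|K|<\infty$.

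\textbf{Step 2 (uniform stochastic term).} This is the main obstacle. We write $\widehat g_1-\E\widehat g_1$ as an empirical process indexed by $v\in\mathcal{L}$ over the class $\{(y^*,\dhat)\mapsto \dhat\,\Kcal_h(v-y^*)\}$. The summands are bounded by $|\dhat|\,\|\Kcal_h\|_\infty$, with $|\dhat|\le 1/(p_1-p_0)$. From Assumption~\ref{as:noise} and \ref{as:kernel}(ii), $\|\Kcal_h\|_\infty\lesssim h^{-\beta-1}/c_e$. Using Parseval and the bounded density of $Y^*$, the variance satisfies $\Var\{\Kcal_h(v-Y^*)\}\lesssim \int\Kcal_h^2=(2\pi)^{-1}\int|\varphi_K(ht)|^2/|\varphi_e(t)|^2dt\lesssim h^{-2\beta-1}/c_e^2$.

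We plan to cover $\mathcal{L}$ by a grid of mesh $h^{\beta+2}n^{-1}$ (polynomially many points) and apply Bernstein's inequality at each grid point with a union bound. The deviation is then $O_p\{(\log n/(c_e^2nh^{2\beta+1}))^{1/2}\}$ at the density level, provided the Bernstein linear term is dominated; this requires Assumption~\ref{as:bw}. Lipschitz control between grid points comes from the bounded variation of $K^{(j)}$ in Assumption~\ref{as:kernel}(iii), which bounds $\|\Kcal_h'\|_\infty$. Alternatively, we could use a VC-type class bound (Einmahl--Mason / Talagrand) exploiting bounded variation.

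The stated rate $\rho_n$ involves $h^{2\beta-1}$ rather than $h^{2\beta+1}$. We gain this because the estimator only uses the integrated quantities $\int_{-\infty}^{\log t}\widehat g_1/\widehat H_Y$ and $\widehat H_Y$, and integration removes one factor $h^{-1}$. Concretely, the relevant kernel is $\overline{\Kcal}_h$, whose Fourier transform has an extra $(it)^{-1}$ factor. We will therefore carry out the Bernstein/chaining argument directly for the integrated processes. These are $\widehat H_Y$ and $A_n(s)=\int_{\log t_0'}^{s}\{\widehat g_1-\E\widehat g_1\}(v)w(v)\,dv$ with smooth weight $w=1/H_Y$. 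After integrating, the indexing kernel has $L_2$ norm $\lesssim h^{-\beta+1/2}/c_e$, which yields $\{\log n/(c_e^2nh^{2\beta-1})\}^{1/2}$.

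\textbf{Step 3 (delta method).} We write $\Lambda(s)=\int^{s} g_1/H_Y$ and expand
\[
\frac{\widehat g_1}{\widehat H_Y}-\frac{g_1}{H_Y}=\frac{\widehat g_1-g_1}{H_Y}-\frac{g_1(\widehat H_Y-H_Y)}{H_Y\widehat H_Y}-\frac{(\widehat g_1-g_1)(\widehat H_Y-H_Y)}{H_Y\widehat H_Y}.
\]
By Steps 1--2 and Assumption~\ref{as:bw}, $\sup|\widehat H_Y-H_Y|=o_p(1)$, so $\widehat H_Y\ge\eta/2$ on $\mathcal{L}$ with probability tending to one (Assumption~\ref{as:trunc}). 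Integrating, the first term is controlled by the integrated process of Step 2 plus the $O(h^r)$ bias. The second term is at most $\sup|\widehat H_Y-H_Y|\cdot\int|g_1|/\eta^2$. The cross term is a product of a density-rate quantity and an $O_p(\rho_n)$ quantity. It is $o_p(\rho_n)$ because $\{\log n/(c_e^2nh^{2\beta+1})\}^{1/2}\rho_n\to0$, which we verify under Assumption~\ref{as:bw} together with $h\to0$. If this check fails, we will instead integrate by parts to move the derivative off $\widehat g_1-g_1$.

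The lower range from $-\infty$ to $\log t_0'$ is handled identically, since all quantities are supported in $[a_1,a_2]$ up to kernel tails. Thus $\sup_{t\in\mathcal{T}}|\widehat\Lambda-\Lambda|=O_p(\rho_n)$. Since $x\mapsto e^{-x}$ is 1-Lipschitz on $[0,\infty)$ and the cumulative hazards are bounded in probability, $\sup|\widehat S_{\mathrm{NA}}-S_T|=O_p(\rho_n)$.

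The Laplace specialization then follows from substituting $\beta=2$ and $c_e=b^{-2}$. Equating $h^r$ with $(b^4\log n/(nh^3))^{1/2}$ gives the stated $h$ and rate. For fixed $b$ both terms vanish under Assumption~\ref{as:bw}, which proves \eqref{eq:S_NA_op1}.
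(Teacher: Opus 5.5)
The skeleton of your argument matches the paper's proof. Both use the deconvolution identity $\E[\Kcal_h(v-Y-e)\mid Y]=K_h(v-Y)$ with an $r$-th order Taylor expansion for the bias. Both bound the \emph{integrated} kernel class with envelope $\asymp h^{-\beta}/c_e$ and weak variance $\asymp h^{-(2\beta-1)}/c_e^{2}$. These become the paper's $b^2h^{-2}$ and $b^4h^{-3}$ for Laplace; you do it for general $\beta$ via Fourier, while the paper uses the Laplace closed form.

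\textbf{The gap is the cross term} $\int^{s}(\widehat g_1-g_1)(\widehat H_Y-H_Y)/(H_Y\widehat H_Y)\,dv$.
\begin{itemize}
\item Your product bound gives $O_p(r^{\mathrm d}_n\rho_n)$, where $r^{\mathrm d}_n=h^r+\{\log n/(c_e^2nh^{2\beta+1})\}^{1/2}$ is the density-level rate. For this to be $o_p(\rho_n)$ you need $r^{\mathrm d}_n\to0$, i.e.\ $c_e^2nh^{2\beta+1}/\log n\to\infty$.
\item The condition you propose to check, $\{\log n/(c_e^2nh^{2\beta+1})\}^{1/2}\rho_n\to0$, would only make the term $o_p(1)$.
\item Assumption~\ref{as:bw} implies neither condition. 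Take $c_e^2nh^{2\beta-1}=(\log n)^2$: the assumption holds, yet $r^{\mathrm d}_n\rho_n\gtrsim\log n/(c_e^2nh^{2\beta})=(h\log n)^{-1}\to\infty$, and $r^{\mathrm d}_n\to\infty$ as well.
\item Integrating by parts does not repair this. Moving the derivative off $\widehat g_1-g_1$ puts it on $(\widehat H_Y-H_Y)/(H_Y\widehat H_Y)$, whose derivative contains $\widehat g_Y-g_Y$, again of order $r^{\mathrm d}_n$. In general, $\int\phi\,d(\widehat H_1-H_1)$ is bounded either by $\|\phi\|_\infty\int|\widehat g_1-g_1|$ or by $\|\widehat H_1-H_1\|_\infty$ times the variation of $\phi$. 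In both cases one factor is density-level, so the bound stays at $\rho_n r^{\mathrm d}_n$.
\end{itemize}

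\textbf{What your argument does establish.} At the balanced bandwidth $h^{2r+2\beta-1}\asymp\log n/(c_e^2n)$ (for Laplace, $h\asymp(b^4\log n/n)^{1/(2r+3)}$), you get $r^{\mathrm d}_n\asymp h^{r-1}\to0$ because $r\ge2$. The cross term is then $O_p(h^{r-1}\rho_n)=o_p(\rho_n)$, so \eqref{eq:optrate} and \eqref{eq:S_NA_op1} hold at that bandwidth. For \eqref{eq:unif} at a general $h$, you should add $c_e^2nh^{2\beta+1}/\log n\to\infty$ to Assumption~\ref{as:bw}.

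The paper's proof does not supply the missing step either. Its remainder bound \eqref{eq:remainder_bound_def} is quadratic in $\|\widehat H_1-H_1\|_\infty$ and $\|\widehat H_Y-H_Y\|_\infty$. That remainder contains exactly this cross term, and the bound is asserted without the total-variation control it needs. Your explicit ratio decomposition exposes an issue the paper's Hadamard expansion hides. Closing it requires the stronger bandwidth condition, not integration by parts.
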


\begin{theorem}[Asymptotic Normality and Consistency of Variance Estimation]
\label{thm:boot}
\ \\

Let
$\varsigma_{n}^{2}(t)=n\Var\{\widehat S_{\mathrm{NA}}(t)\}$ for a fixed $t$ in $\mathcal{T}$. Under
Assumptions~\ref{as:ancillary}--\ref{as:bw},
\begin{equation}
  \sqrt{n}\,\varsigma_{n}^{-1}(t)
  \left\{\widehat S_{\mathrm{NA}}(t)-\E\widehat S_{\mathrm{NA}}(t)\right\}
  \stackrel{d}{\longrightarrow} N(0,1) \quad \text{as} \quad n \rightarrow \infty. 
  \label{eq:clt}
\end{equation}
Let $\omega_{1},\dots,\omega_{n}$ be independent and identically distributed,
independent of the data, with
{$\E(\omega_j)=1$ and $\Var(\omega_j)=1$ for all $j=1,\cdots,n$.} Let
$\widehat S^{\omega}_{\mathrm{NA}}(t)$ be recomputed from $\widehat g^{\omega}_{1}(v)
  =\frac{1}{n}\sum_{i=1}^{n}\omega_{i}\,\dhat_{i}\,\Kcal_{h}(v-Y^{*}_{i})$ and $
  \widehat H^{\omega}_{Y}(v)
  =\frac{1}{n}\sum_{i=1}^{n}\omega_{i}
   \int_{v}^{\infty}\Kcal_{h}(w-Y^{*}_{i})\,dw$.
Then,
conditionally on $\mathcal{O}^*$, the conditional variance
$\widehat\varsigma_{n}^{2}(t)
=n\Var\{\widehat S^{\omega}_{\mathrm{NA}}(t)\mid\ \mathcal{O}^*\}$ satisfies 
\begin{equation}
  \widehat\varsigma_{n}^{2}(t)\big/\varsigma_{n}^{2}(t)
  \overset{p}{\longrightarrow}1 .
  \label{eq:bootvar}
\end{equation}
Therefore, $\widehat\varsigma_{n}^{2}(t)$ consistently estimate $\varsigma_n^2(t)$.
\end{theorem}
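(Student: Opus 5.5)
The proof of Theorem~\ref{thm:boot} linearizes $\widehat S_{\mathrm{NA}}(t)$ into a sum of i.i.d.\ terms. A triangular-array CLT then gives \eqref{eq:clt}, and a conditional (multiplier) version of the same expansion gives \eqref{eq:bootvar}.

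\emph{Step 1: linearization.} Write $\Lambda(t)=\int_{-\infty}^{\log t} g_1/H_Y$ and $\widehat\Lambda(t)=\int_{-\infty}^{\log t}\widehat g_1/\widehat H_Y$. The integrals effectively run over $[a_1-O(h),\log t]$, since mass of $\widehat g_1$ below $a_1$ is negligible. Because $H_Y\ge\eta$ on $\mathcal{L}$ and $\sup|\widehat H_Y-\E\widehat H_Y|=O_p(\rho_n)$ by the argument of Theorem~\ref{thm:main}, a first-order expansion of $x/y$ around $(\E\widehat g_1,\E\widehat H_Y)$ gives
\[
\widehat\Lambda(t)-\bar\Lambda(t)=\frac1n\sum_{i=1}^n \xi_{ni}(t)+R_n(t),\qquad
\xi_{ni}(t)=\int_{-\infty}^{\log t}\Bigl\{\frac{\dhat_i\Kcal_h(v-Y^*_i)-\E\widehat g_1(v)}{\E\widehat H_Y(v)}-\frac{\E\widehat g_1(v)\,\{\overline{\Kcal}^c_{h,i}(v)-\E\widehat H_Y(v)\}}{\{\E\widehat H_Y(v)\}^2}\Bigr\}dv .
\]
Here $\bar\Lambda$ is built from the expectations, and $\overline{\Kcal}^c_{h,i}(v)=\int_v^\infty\Kcal_h(w-Y^*_i)\,dw$. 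The remainder satisfies $|R_n|=O_p(\sup|\widehat g_1-\E\widehat g_1|\cdot\sup|\widehat H_Y-\E\widehat H_Y|+\sup|\widehat H_Y-\E\widehat H_Y|^2)$.

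The point of integrating $\dhat_i\Kcal_h$ in $v$ is that it yields $\dhat_i$ times a smoothed indicator $\overline{\Kcal}_h(\log t-Y^*_i)$, reweighted by $1/\E\widehat H_Y$. So $\xi_{ni}$ is essentially a bounded-kernel object. Using $\mathcal{K}_h=K_h-b^2K_h''$ and integrating by parts, $\Var\,\xi_{ni}$ should be $O(1+h^{-(2\beta-1)}c_e^{-2})$ at worst, since the $1/\varphi_e$ factor only enters through the boundary term at $\log t$. The delta method for $\exp(-\cdot)$ then gives $\widehat S_{\mathrm{NA}}(t)-\E\widehat S_{\mathrm{NA}}(t)=-\bar S(t)\,n^{-1}\sum_i\xi_{ni}(t)+o_p(\varsigma_n/\sqrt n)$. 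This needs the remainder to be negligible relative to $\varsigma_n(t)/\sqrt n$, and also $\E\widehat S_{\mathrm{NA}}$ to be replaceable by $\exp(-\bar\Lambda)$ to the same order. I will verify both using $\rho_n^2\sqrt n/\varsigma_n\to0$, which follows from Assumption~\ref{as:bw} together with a lower bound $\varsigma_n^2\gtrsim\Var\,\xi_{n1}$.

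\emph{Step 2: CLT.} The summands $\xi_{ni}(t)$ are i.i.d.\ across $i$ for fixed $n$ by Assumption~\ref{as:ancillary}. I would check Lyapunov's condition $\E|\xi_{n1}|^{3}/\{\sqrt n(\Var\,\xi_{n1})^{3/2}\}\to0$. The bound $|\xi_{n1}|\le C\|\overline{\Kcal}_h\|_\infty|\dhat_1|/\eta$ is polynomial in $h^{-1}$, and $|\dhat_1|\le\max(p_0,1-p_0)/(p_1-p_0)$ is bounded. Combined with Assumption~\ref{as:bw}, these give the condition. Finally, $n\Var\,\widehat S_{\mathrm{NA}}(t)=\bar S^2\Var\,\xi_{n1}(1+o(1))$ identifies $\varsigma_n^2$, and \eqref{eq:clt} follows.

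\emph{Step 3: bootstrap.} Write $\omega_i=1+(\omega_i-1)$. The same expansion, conditional on $\mathcal{O}^*$, gives $\widehat S^{\omega}_{\mathrm{NA}}(t)-\widehat S_{\mathrm{NA}}(t)\approx-\widehat S_{\mathrm{NA}}(t)\,n^{-1}\sum_i(\omega_i-1)\widehat\xi_{ni}(t)$, where $\widehat\xi_{ni}$ is $\xi_{ni}$ with the expectations replaced by $\widehat g_1,\widehat H_Y$. Hence
\[
\widehat\varsigma_n^2(t)\approx\widehat S^2_{\mathrm{NA}}(t)\,\frac1n\sum_i\widehat\xi_{ni}^2(t).
\]
It remains to show $n^{-1}\sum_i\widehat\xi_{ni}^2/\Var\,\xi_{n1}\to_p1$. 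This follows from a weak law of large numbers for triangular arrays, using $\E\xi^4_{n1}/\{n(\E\xi^2_{n1})^2\}\to0$, and from uniform consistency of $\widehat g_1,\widehat H_Y$ (Theorem~\ref{thm:main}) to pass from $\xi$ to $\widehat\xi$.

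The main obstacle is controlling the bootstrap remainder conditionally. This requires showing $\sup_v|\widehat H^\omega_Y-\widehat H_Y|=O_{p}(\rho_n)$ in conditional probability, so that $\widehat H^\omega_Y\ge\eta/2$ on $\mathcal{L}$ with high probability. I would try to get this from the same chaining/bounded-variation argument as in Theorem~\ref{thm:main}, applied to the multiplier process. Because $\omega_i$ are not assumed bounded, I would truncate them at $\sqrt{\log n}$ scale.
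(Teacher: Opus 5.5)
Your route is the paper's. You linearize $\widehat S_{\mathrm{NA}}(t)$ into an i.i.d.\ sum of influence terms and apply a triangular-array CLT with $\|\xi\|_\infty\lesssim c_e^{-1}h^{-\beta}$ against $\Var\,\xi\asymp c_e^{-2}h^{-(2\beta-1)}$. You then linearize the multiplier statistic to get $\widehat\varsigma_n^2\approx\widehat S_{\mathrm{NA}}^2(t)\,n^{-1}\sum_i\widehat\xi_{ni}^2$, and finish with a fourth-moment law of large numbers plus $\widehat\xi\approx\xi$. Expanding around $(\E\widehat g_1,\E\widehat H_Y)$ rather than $(g_1,H_Y)$ is an improvement: the smoothing bias never enters your remainder, whereas the paper needs $h^r(nh^{2r+3})^{1/2}\to0$ for it. So you can drop $h^r$ from your condition.

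\textbf{Main gap: the remainder.} The step that does not go through is the negligibility of the remainder. Your bound contains $\sup_v|\widehat g_1(v)-\E\widehat g_1(v)|$. This is a density-level deviation of order $\{\log n/(c_e^2nh^{2\beta+1})\}^{1/2}$, which is not $O_p(\rho_n)$ but a factor $h^{-1}$ larger. With $\varsigma_n\asymp c_e^{-1}h^{-(2\beta-1)/2}$ it yields only $\sqrt n\,R_n/\varsigma_n=O_p\bigl(\log n/\{c_e(nh^{2\beta+1})^{1/2}\}\bigr)$.

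Even the stochastic part of your target condition, $\log n/\{c_e(nh^{2\beta-1})^{1/2}\}\to0$, is not implied by Assumption~\ref{as:bw}. That assumption gives a $\log n$ rather than a $\log^2 n$ margin; take $c_e^2nh^{2\beta-1}=\log n\,\log\log n$ as a counterexample.

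Passing from $\xi$ to $\widehat\xi$ has the same problem. Consistency of $\widehat g_1,\widehat H_Y$ is not enough; you need a rate such as $\max_i|\widehat\xi_{ni}-\xi_{ni}|=o_p(\varsigma_n)$, and $\widehat\xi-\xi$ also carries density-level deviations.

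The paper gets past this step only by invoking its bandwidth $h\asymp(b^4\log n/n)^{1/(2r+3)}$, i.e.\ outside Assumption~\ref{as:bw}. Its claim that $b^2\log n/(nh^3)^{1/2}\to0$ under Assumption~\ref{as:bw} has the same $\log n$ versus $\log^2 n$ slip, which that bandwidth repairs. You need to do likewise, in one of two ways:
\begin{itemize}
\item impose an explicit rate such as $c_e^2nh^{2\beta+1}/\log^2 n\to\infty$, which both of the paper's bandwidths satisfy under the Laplace mechanism because $r\ge2$; or
\item bound the cross term $\int(\widehat g_1-\E\widehat g_1)(\widehat H_Y-\E\widehat H_Y)w\,dv$ directly as a degenerate $V$-statistic rather than by a product of sup norms.
\end{itemize}
Quoting the paper's quadratic bound in integrated sup norms is not a fix. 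The term $\int(\widehat H_Y-H_Y)\,d(\widehat H_1-H_1)/(H_Y\widehat H_Y)$ is not controlled by $\|\widehat H_1-H_1\|_\infty\|\widehat H_Y-H_Y\|_\infty$ without total-variation bounds.

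\textbf{Two smaller points.} First, $n\Var\,\widehat S_{\mathrm{NA}}(t)=\bar S^2(t)\Var\,\xi_{n1}(t)\{1+o(1)\}$ and $\E\widehat S_{\mathrm{NA}}(t)=\exp\{-\bar\Lambda(t)\}+o(\varsigma_n/\sqrt n)$ are statements about moments. They do not follow from $o_p$ control of the remainder without uniform integrability. Because $\widehat g_1$ and $\widehat H_Y$ can be negative, and $\widehat H_Y$ can come arbitrarily close to zero with small probability, $\widehat S_{\mathrm{NA}}(t)$ is unbounded and its variance need not be finite. The same applies to the conditional variance defining $\widehat\varsigma_n^2(t)$. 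The paper sidesteps the variance identification by taking $\varsigma_n^2(t)=\Var\,\xi(t;\mathcal{O}^*_1)$ in its proof. You should either adopt that normalization or work with a truncated estimator, e.g.\ on the event $\inf_{v\le\log\tau'}\widehat H_Y(v)\ge\eta/2$.

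Second, the lower bound on $\Var\,\xi_{n1}$ that your Lyapunov and remainder steps both rely on is true but should be proved; the paper only asserts $\varsigma_n^2\asymp h^{-3}$ from upper bounds. For Laplace noise, integrating $-b^2K_h''$ by parts in $\xi_{n1}$ leaves the boundary term $-b^2\dhat_1K_h'(\log t-Y^*_1)/\E\widehat H_Y(\log t)$. Its variance is of exact order $b^4h^{-3}$, with constant proportional to $f_{Y^*}(\log t)\,\E(\dhat_1^2\mid Y^*_1=\log t)\int(K')^2/H_Y^2(\log t)>0$. The rest of $\xi_{n1}$ has variance $O(1+b^4h^{-1})$.
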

We outline proofs for Theorems~\ref{thm:main} and~\ref{thm:boot} in Supplementary Mateiral and we also have several remarks on the theorems.
\begin{remark}\label{rm:rate}
    When the perturbation distribution is Laplace, the smoothness index $\beta=2$. With the order of Kernel function, $r=2$, the rate of uniform convergence is $O_p\left\{ \left( \frac{\log n}{n} \right)^{2/7} \right\}$.
\end{remark}
\begin{remark}\label{rm:samplesize}
Since $b=\Delta/\epsilon_{\mathrm{time}}$, the uniform convergence rate in
\eqref{eq:optrate} depends on both the sample size $n$ and the privacy budget $\epsilon_{\text{time}}$ through the combination $n\epsilon_{\mathrm{time}}^{4}/(\Delta^{4}\log n)$.
Hence, if the privacy budget is tightened, the sample size must increase
accordingly to maintain the same bound on the uniform convergence rate. In
particular, halving $\epsilon_{\mathrm{time}}$ requires about sixteen times as
many records.    
\end{remark}
\begin{remark}\label{rm:bw}
    Theorem~\ref{thm:boot} centers the asymptotic distribution of $\widehat{S}_{\mathrm{NA}}(t)$ at $\mathbb{E}\widehat{S}_{\mathrm{NA}}(t)$ rather than at the true target $S_{T}(t)$. This is because $\widehat{S}_{\mathrm{NA}}(t)$ is a smoothed estimator and the bias due to smoothing is not corrected and is beyond scope of this approach. Although the uniform bandwidth in Theorem~\ref{thm:main} optimizes estimation error by balancing both the smoothing biases and the stochastic fluctuations, it does not minimize the point-wise variance against squared bias. In practice, we recommend an optimal bandwidth $h_{\mathrm{opt}} \asymp \left( \frac{b^4}{n} \right)^{\frac{1}{2r+3}}$ that minimizes the mean squared error (MSE). It can be verified that $h_{\mathrm{opt}}$ also guarantees uniform consistency over $\mathcal{T}$.
\end{remark}

\section{Simulation Studies} \label{sec:simulation}
We evaluate the proposed estimator, the Nelson--Aalen-type
estimator $\widehat{S}_{\mathrm{NA}}(\cdot)$ in \eqref{eq:SNAhat}, which does not require the specification of censoring distribution and compared it with several counterparts including $\widehat{S}_{\mathrm{IPW}}(\cdot)$ in \eqref{eq:SurvIPW-Proposed} with a known censoring distribution, and a naive estimator, denoted $\widehat{S}_{\mathrm{naive}}(\cdot)$, is given in \eqref{eq:naive}
that does not correct the biases from privatization.

We conduct two simulation studies. Simulation I examined the finite-sample performance of the proposed estimator in consistency and assessed the
performance of the variance estimators. Simulation II assessed the validity of optimal bandwidth $h$ given in Remark~\ref{rm:bw} with a fixed privacy budget.
\subsection{Simulation I}

We generate data as follows. For each subject $i$, we generated true event and censoring times independently as
$\log T_{i}\sim N(1,0.5^{2})$ and
$\log C_{i}\sim N(\mu_C,0.7^{2})$. We chose different values of $\mu_C=1.218$ to generate a moderate censoring rate with $40\%$ in the data and $\mu_C=0.782$ with heavy censoring rate around $60\%$. Following \eqref{eq:perturb-mech}, we generated privatized observed time subject to censoring with $e_{i}\sim\mathrm{Laplace}(0,b)$. We then flipped the event indicator by $\Pr(\delta_{i}^{*}=1\mid\delta_{i})
=p_1\delta_{i}+p_0(1-\delta_{i})$.
We considered three privacy budgets for protecting the event times according to the deployed local differential privacy systems by Apple Inc. Apple reports per-donation budgets of $\epsilon_{\mathrm{time}}=2$ for Health
data types, $4$ for emoji suggestions and $8$ for QuickType
suggestions\footnote{\url{https://www.apple.com/privacy/docs/Differential_Privacy_Overview.pdf}}. We refer to these
three settings as strong, moderate and mild privacy. We chose the release domain for the data to be $[a_{1},a_{2}]=[-0.7,2.2]$ so that $\Delta=2.9$ and $b=\Delta/\epsilon_{\mathrm{time}}
=1.450,0.725,0.363$ with different privacy budgets. We chose $\epsilon_{\mathrm{RR}}=2$, so that $p_{1}=\exp(2)/\{1+\exp(2)\}=0.881$ and
$p_{0}=1/\{1+\exp(2)\}=0.119$ to flip the event indicators. This lies within the range implied by the parameters of Google's
RAPPOR system \citep{erlingsson2014rappor}. We considered
sample sizes $n=1000, 4000$ and replicated each simulation scenario for 300 times. 

We employed the Gaussian kernel for $K_h(\cdot)$ with the bandwidth
$h=(b^{4}/n)^{1/7}$ (with $r=2$) as suggested by Remark~\ref{rm:bw} to implement proposed estimators and the naive estimator. For each replicate, the standard error of
$\widehat{S}_{\mathrm{NA}}(\cdot)$ was estimated by the multiplier bootstrap of
Theorem~\ref{thm:boot} with $B=150$ draws of $\omega_{i}\sim\mathrm{Exp}(1)$. By choosing the trimming threshold $\eta=0.05$ which ensures at least 5\% of subjects are at risk, we evaluated all estimators on $t\in[1,\exp(1.5)]$. We also assessed the accuracy of estimators by computing the mean absolute errors (MAE) on a grid of $K$ time points, which is $\frac{1}{K}\sum_{k=1}^{K} \left\vert{} \widehat{S}_{est}(t_k) - S(t_k) \right\vert{}$
for any estimator $\widehat{S}_{est} (t)$ with the true survivor function $S_T(t)$ of $T$. 

Tables~\ref{tab:sim1point1000} and \ref{tab:sim1point4000} summarize estimates for $\widehat{S}_{\mathrm{NA}}(t),\widehat{S}_{\mathrm{IPW}}(t)$ and $\widehat{S}_{\mathrm{naive}}(t)$ evaluated $\log t=0.3,0.7,1.1$ over the $300$ replicates. The sample mean of point-wise estimated survivor function, the bias, the sample standard error (SSE) and
the mean of the estimated standard errors obtained from the multiplier
bootstrap (ESE) are reported. In addition to evaluating the true survivor function for $T$, we also computed an oracle
kernel estimator 
\begin{eqnarray} \label{S_oracle}
\widehat{S}_{\mathrm{oracle}}(t)=\exp\Big\{-\int_{-\infty}^{\log t}
  \frac{n^{-1}\sum_{i}\delta_{i}K_{h}(v-Y_{i})}
       {n^{-1}\sum_{i}\int_{v}^\infty K_h(w-Y_i)dw}\Big\}
\end{eqnarray}
that is obtained with non-private data $\{\{Y_{i},\delta_{i}\},i=1,\cdots,n\}$. Comparing with the proposed estimator $\widehat{S}_{\mathrm{NA}}(\cdot)$ in \eqref{eq:SNAhat}, this oracle kernel estimator substitute $\Kcal_h(\cdot)$ with the ordinary kernel
$K_{h}$.

We see that the sample means of our proposed estimators $\widehat{S}_{\mathrm{NA}}(t)$ with unknown censoring distribution and $\widehat{S}_{\mathrm{IPW}}(t)$ with known censoring distribution are fairly close to the true survivor functions and the oracle kernel estimator. The closeness of the proposed estimators to the oracle kernel estimator reveals that the biases incurred in the proposed estimators are induced by the intrinsic smoothing bias and our proposed estimator has de-biased the perturbation in the resulting estimators. This agreement between proposed estimator and the oracle kernel estimator holds regardless of how strong the privacy protection is, which confirms that our proposed estimator can perform well to de-bias even when privacy budget is tightened. It also verifies that the bandwidth we chose makes the stochastic error term vanish and the consistency of our proposed estimator holds, {verifying the validity of (\ref{eq:S_NA_op1}) in Theorem~\ref{thm:main}.}

As expected, the bias and the SSE both decrease as the
privacy budget loosens and as $n$ increases, which is consistent with the
uniform consistency established in Theorem~\ref{thm:main}. Comparing $\widehat{S}_{\mathrm{NA}}(t)$ with $\widehat{S}_{\mathrm{IPW}}(t)$, their bias and SSE are almost identical to each other, indicating that loss of efficiency due to not
knowing $G_{C}(\cdot)$ can thus be recovered by $\widehat{S}_{\mathrm{NA}}(t)$. The agreement is closest under mild privacy and moderate censoring and remains close under strong privacy and heavy censoring, where $\widehat{S}_{\mathrm{NA}}(t)$ carries a slightly larger bias because $\widehat{H}_{Y}(v)$ becomes small in the right tail. However, even in that most demanding scenario, the penalty for not knowing the censoring distribution is modest.

Furthermore, the mean bootstrap standard error agrees
closely with the sample standard error for both 
$\widehat{S}_{\mathrm{NA}}(t)$ and $\widehat{S}_{\mathrm{IPW}}(t)$ across all scenarios of privacy budgets and censoring. This verifies that the multiplier bootstrap of Theorem~\ref{thm:boot} provides a consistent
variance estimator. On the contrary, the naive estimator exhibit substantial biases with 3 to 4 times of the biases for $\widehat{S}_{\mathrm{NA}}(t)$. On the other hand, it is noted that SSE and ESE from $\widehat{S}_{\mathrm{naive}}(t)$ are smaller than those from $\widehat{S}_{\mathrm{NA}}(t)$ and $\widehat{S}_{\text{IPW}}$. This is as expected because higher SSE and ESE are price of bias correction. However, combining biases and SSEs show that the resulting MSE of the naive estimator are larger than those of $\widehat{S}_{\mathrm{NA}}(t)$ and $\widehat{S}_{\mathrm{IPW}}(t)$ at
$\log t=0.3$ and $1.1$, exhibiting the evidence of dealing with differential privacy. The
exception is $\log t=0.7$, where the bias of the naive estimator changes sign
and is therefore close to zero. This reflects the choice of evaluation point
rather than any property of the de-biased estimator.

\begin{table}[htbp]
\caption{Simulation I: Sample mean, bias, sample standard error (SSE) and mean estimated standard error (ESE) from the multiplier bootstrap for proposed estimates over 300 replicates with $n=1000$}
\label{tab:sim1point1000}
\scalebox{0.75}{
\hspace{-1cm}
\begin{tabular}{llll l llll llll llll}\toprule
& & & & $\widehat{S}_{\mathrm{oracle}}$ & \multicolumn{4}{c}{$\widehat{S}_{\mathrm{NA}}$} & \multicolumn{4}{c}{$\widehat{S}_{\mathrm{IPW}}$} & \multicolumn{4}{c}{$\widehat{S}_{\mathrm{naive}}$}\\
\cmidrule(lr){5-5}\cmidrule(lr){6-9}\cmidrule(lr){10-13}\cmidrule(lr){14-17}
privacy & censoring & $\log t$ & $S_T(t)$ & mean & mean & bias & SSE & ESE & mean & bias & SSE & ESE & mean & bias & SSE & ESE\\ \midrule
strong & moderate & 0.3 & 0.919 & 0.863 & 0.859 & -0.060 & 0.046 & 0.044 & 0.862 & -0.057 & 0.045 & 0.044 & 0.765 & -0.154 & 0.014 & 0.014\\
 &  & 0.7 & 0.726 & 0.705 & 0.705 & -0.021 & 0.054 & 0.050 & 0.712 & -0.014 & 0.053 & 0.051 & 0.689 & -0.037 & 0.017 & 0.016\\
 &  & 1.1 & 0.421 & 0.494 & 0.500 & 0.079 & 0.061 & 0.059 & 0.501 & 0.081 & 0.062 & 0.063 & 0.603 & 0.182 & 0.017 & 0.017\\
\addlinespace
 & heavy & 0.3 & 0.919 & 0.878 & 0.876 & -0.044 & 0.044 & 0.043 & 0.880 & -0.039 & 0.044 & 0.042 & 0.799 & -0.120 & 0.014 & 0.014\\
 &  & 0.7 & 0.726 & 0.734 & 0.739 & 0.013 & 0.056 & 0.055 & 0.745 & 0.019 & 0.054 & 0.055 & 0.733 & 0.007 & 0.016 & 0.016\\
 &  & 1.1 & 0.421 & 0.535 & 0.544 & 0.124 & 0.076 & 0.075 & 0.528 & 0.107 & 0.082 & 0.081 & 0.657 & 0.236 & 0.018 & 0.018\\
\midrule
moderate & moderate & 0.3 & 0.919 & 0.889 & 0.889 & -0.031 & 0.025 & 0.027 & 0.890 & -0.029 & 0.025 & 0.028 & 0.817 & -0.102 & 0.012 & 0.013\\
 &  & 0.7 & 0.726 & 0.713 & 0.714 & -0.011 & 0.034 & 0.034 & 0.719 & -0.007 & 0.034 & 0.036 & 0.701 & -0.025 & 0.016 & 0.016\\
 &  & 1.1 & 0.421 & 0.464 & 0.465 & 0.044 & 0.042 & 0.041 & 0.464 & 0.043 & 0.045 & 0.046 & 0.556 & 0.135 & 0.018 & 0.018\\
\addlinespace
 & heavy & 0.3 & 0.919 & 0.897 & 0.897 & -0.023 & 0.027 & 0.028 & 0.899 & -0.021 & 0.026 & 0.028 & 0.831 & -0.089 & 0.012 & 0.013\\
 &  & 0.7 & 0.726 & 0.730 & 0.731 & 0.005 & 0.038 & 0.038 & 0.733 & 0.007 & 0.038 & 0.040 & 0.723 & -0.003 & 0.017 & 0.016\\
 &  & 1.1 & 0.421 & 0.488 & 0.492 & 0.071 & 0.055 & 0.053 & 0.479 & 0.059 & 0.061 & 0.061 & 0.592 & 0.171 & 0.020 & 0.019\\
\midrule
mild & moderate & 0.3 & 0.919 & 0.904 & 0.905 & -0.014 & 0.018 & 0.017 & 0.906 & -0.013 & 0.018 & 0.018 & 0.868 & -0.051 & 0.011 & 0.011\\
 &  & 0.7 & 0.726 & 0.720 & 0.720 & -0.006 & 0.024 & 0.025 & 0.721 & -0.004 & 0.025 & 0.027 & 0.714 & -0.012 & 0.015 & 0.015\\
 &  & 1.1 & 0.421 & 0.444 & 0.445 & 0.025 & 0.031 & 0.030 & 0.443 & 0.023 & 0.034 & 0.037 & 0.504 & 0.083 & 0.020 & 0.018\\
\addlinespace
 & heavy & 0.3 & 0.919 & 0.908 & 0.909 & -0.010 & 0.018 & 0.019 & 0.910 & -0.009 & 0.018 & 0.019 & 0.863 & -0.056 & 0.011 & 0.012\\
 &  & 0.7 & 0.726 & 0.729 & 0.728 & 0.003 & 0.029 & 0.028 & 0.729 & 0.003 & 0.030 & 0.031 & 0.712 & -0.013 & 0.017 & 0.017\\
 &  & 1.1 & 0.421 & 0.458 & 0.458 & 0.037 & 0.042 & 0.039 & 0.448 & 0.028 & 0.047 & 0.048 & 0.513 & 0.092 & 0.022 & 0.021\\
 \bottomrule\end{tabular}}\end{table}

\begin{table}[htbp]
\caption{Simulation I: Sample mean, bias, sample standard error (SSE) and mean estimated standard error (ESE) from the multiplier bootstrap for proposed estimates over 300 replicates with $n=4000$}
\label{tab:sim1point4000}
\scalebox{0.75}{
\hspace{-1cm}
\begin{tabular}{llll l llll llll llll}\toprule
& & & & $\widehat{S}_{\mathrm{oracle}}(\cdot)$ & \multicolumn{4}{c}{$\widehat{S}_{\mathrm{NA}}(\cdot)$} & \multicolumn{4}{c}{$\widehat{S}_{\mathrm{IPW}}(\cdot)$} & \multicolumn{4}{c}{$\widehat{S}_{\mathrm{naive}}(\cdot)$}\\
\cmidrule(lr){5-5}\cmidrule(lr){6-9}\cmidrule(lr){10-13}\cmidrule(lr){14-17}
privacy & censoring & $\log t$ & $S_T(t)$ & mean & mean & bias & SSE & ESE & mean & bias & SSE & ESE & mean & bias & SSE & ESE\\ \midrule
strong & moderate & 0.3 & 0.919 & 0.878 & 0.881 & -0.038 & 0.029 & 0.029 & 0.883 & -0.036 & 0.030 & 0.029 & 0.766 & -0.154 & 0.007 & 0.007\\
 &  & 0.7 & 0.726 & 0.709 & 0.712 & -0.013 & 0.033 & 0.033 & 0.718 & -0.008 & 0.034 & 0.033 & 0.689 & -0.037 & 0.008 & 0.008\\
 &  & 1.1 & 0.421 & 0.478 & 0.477 & 0.056 & 0.038 & 0.039 & 0.477 & 0.056 & 0.040 & 0.042 & 0.601 & 0.181 & 0.009 & 0.009\\
\addlinespace
 & heavy & 0.3 & 0.919 & 0.888 & 0.890 & -0.029 & 0.029 & 0.028 & 0.893 & -0.027 & 0.029 & 0.028 & 0.799 & -0.120 & 0.007 & 0.007\\
 &  & 0.7 & 0.726 & 0.732 & 0.732 & 0.007 & 0.034 & 0.036 & 0.735 & 0.010 & 0.034 & 0.037 & 0.732 & 0.006 & 0.008 & 0.008\\
 &  & 1.1 & 0.421 & 0.510 & 0.508 & 0.088 & 0.049 & 0.051 & 0.493 & 0.072 & 0.052 & 0.055 & 0.655 & 0.235 & 0.009 & 0.009\\
\midrule
moderate & moderate & 0.3 & 0.919 & 0.898 & 0.896 & -0.023 & 0.017 & 0.017 & 0.897 & -0.022 & 0.017 & 0.017 & 0.816 & -0.103 & 0.006 & 0.006\\
 &  & 0.7 & 0.726 & 0.716 & 0.716 & -0.010 & 0.022 & 0.021 & 0.718 & -0.007 & 0.022 & 0.022 & 0.700 & -0.026 & 0.008 & 0.008\\
 &  & 1.1 & 0.421 & 0.451 & 0.453 & 0.032 & 0.026 & 0.026 & 0.451 & 0.031 & 0.028 & 0.029 & 0.555 & 0.135 & 0.009 & 0.009\\
\addlinespace
 & heavy & 0.3 & 0.919 & 0.902 & 0.900 & -0.019 & 0.018 & 0.018 & 0.902 & -0.018 & 0.017 & 0.018 & 0.830 & -0.090 & 0.006 & 0.006\\
 &  & 0.7 & 0.726 & 0.728 & 0.730 & 0.005 & 0.024 & 0.024 & 0.732 & 0.006 & 0.025 & 0.025 & 0.722 & -0.004 & 0.008 & 0.008\\
 &  & 1.1 & 0.421 & 0.469 & 0.472 & 0.051 & 0.034 & 0.035 & 0.460 & 0.039 & 0.038 & 0.038 & 0.590 & 0.169 & 0.010 & 0.010\\
\midrule
mild & moderate & 0.3 & 0.919 & 0.909 & 0.910 & -0.009 & 0.010 & 0.010 & 0.910 & -0.009 & 0.010 & 0.011 & 0.868 & -0.052 & 0.005 & 0.006\\
 &  & 0.7 & 0.726 & 0.721 & 0.722 & -0.004 & 0.014 & 0.015 & 0.723 & -0.003 & 0.015 & 0.015 & 0.713 & -0.012 & 0.008 & 0.008\\
 &  & 1.1 & 0.421 & 0.437 & 0.434 & 0.014 & 0.017 & 0.018 & 0.434 & 0.013 & 0.019 & 0.021 & 0.503 & 0.082 & 0.009 & 0.009\\
\addlinespace
 & heavy & 0.3 & 0.919 & 0.911 & 0.912 & -0.007 & 0.011 & 0.011 & 0.912 & -0.007 & 0.011 & 0.011 & 0.862 & -0.057 & 0.006 & 0.006\\
 &  & 0.7 & 0.726 & 0.728 & 0.728 & 0.002 & 0.016 & 0.017 & 0.729 & 0.003 & 0.017 & 0.018 & 0.712 & -0.013 & 0.008 & 0.008\\
 &  & 1.1 & 0.421 & 0.446 & 0.444 & 0.023 & 0.022 & 0.024 & 0.438 & 0.018 & 0.025 & 0.028 & 0.512 & 0.091 & 0.010 & 0.011\\
\bottomrule\end{tabular}}\end{table}

Table~\ref{tab:sim1mae} reports the sample mean and standard deviation of the MAEs, and Figures~\ref{fig:mae1000} and \ref{fig:mae4000} display the corresponding boxplots for $n=1000$ and $4000$. The proposed
estimator $\widehat{S}_{\mathrm{NA}}$ is close to $\widehat{S}_{\mathrm{IPW}}(t)$ throughout all simulation scenarios and both of them show smaller MAEs than the naive estimator. The difference becomes more evident when privacy gets stronger. It's also notable that the mean MAE of the naive estimator remains unchanged with different sample sizes, confirming that the naive estimator is inconsistent and the necessity to propose a de-bias procedure.

Figures~\ref{fig:curves1000} and \ref{fig:curves4000} plot the estimated survival curves against $t$,
together with the point-wise $2.5\%$ and $97.5\%$ empirical percentiles of
$\widehat{S}_{\mathrm{NA}}(\cdot)$ over the $300$ replicates. The proposed estimator $\widehat{S}_{\mathrm{NA}}(\cdot)$
and $\widehat{S}_{\mathrm{IPW}}(\cdot)$ are nearly indistinguishable and track the true curve
closely. While both of them deviates from the true curve in the right tail, the approximated 95\% confidence bands still cover the true curve. The naive
curve is visibly flattened and deviated from the truth. Analogous to what we have observed in the MAEs, the naive curve does not improve as $n$ increases and it lies outside the empirical band of the proposed estimator over most of the range particularly for a strong privacy scenario.

\begin{table}[htbp]\centering\small
\caption{Simulation I: sample mean and standard deviation of the mean absolute error (MAE)}
\label{tab:sim1mae}
\begin{tabular}{rll cccc}\toprule
& & & $\widehat{S}_{\mathrm{oracle}}(\cdot)$ & $\widehat{S}_{\mathrm{NA}}(\cdot)$ & $\widehat{S}_{\mathrm{IPW}}(\cdot)$ & $\widehat{S}_{\mathrm{naive}}(\cdot)$\\
\cmidrule(lr){4-4}\cmidrule(lr){5-5}\cmidrule(lr){6-6}\cmidrule(lr){7-7}
$n$ & privacy & censoring & mean (SD) & mean (SD) & mean (SD) & mean (SD)\\ \midrule
1,000 & strong & moderate & 0.0602 (0.0029) & 0.0737 (0.0294) & 0.0692 (0.0296) & 0.1600 (0.0053)\\
 &  & heavy & 0.0719 (0.0051) & 0.0874 (0.0351) & 0.0758 (0.0351) & 0.1708 (0.0062)\\
 & moderate & moderate & 0.0340 (0.0043) & 0.0416 (0.0155) & 0.0394 (0.0163) & 0.1124 (0.0059)\\
 &  & heavy & 0.0415 (0.0067) & 0.0513 (0.0195) & 0.0460 (0.0196) & 0.1234 (0.0071)\\
 & mild & moderate & 0.0190 (0.0052) & 0.0252 (0.0092) & 0.0250 (0.0091) & 0.0633 (0.0067)\\
 &  & heavy & 0.0235 (0.0074) & 0.0311 (0.0123) & 0.0309 (0.0126) & 0.0701 (0.0078)\\
\midrule
4,000 & strong & moderate & 0.0455 (0.0018) & 0.0493 (0.0178) & 0.0459 (0.0175) & 0.1589 (0.0026)\\
 &  & heavy & 0.0551 (0.0030) & 0.0602 (0.0218) & 0.0508 (0.0204) & 0.1697 (0.0029)\\
 & moderate & moderate & 0.0242 (0.0022) & 0.0292 (0.0097) & 0.0268 (0.0097) & 0.1120 (0.0030)\\
 &  & heavy & 0.0295 (0.0033) & 0.0361 (0.0121) & 0.0307 (0.0112) & 0.1229 (0.0034)\\
 & mild & moderate & 0.0125 (0.0027) & 0.0153 (0.0050) & 0.0149 (0.0050) & 0.0625 (0.0030)\\
 &  & heavy & 0.0155 (0.0040) & 0.0197 (0.0069) & 0.0186 (0.0065) & 0.0698 (0.0037)\\
\bottomrule\end{tabular}\end{table}
\begin{figure}[htbp]
  \centering
  \begin{subfigure}[t]{0.48\textwidth}
    \centering
    \includegraphics[width=\linewidth]{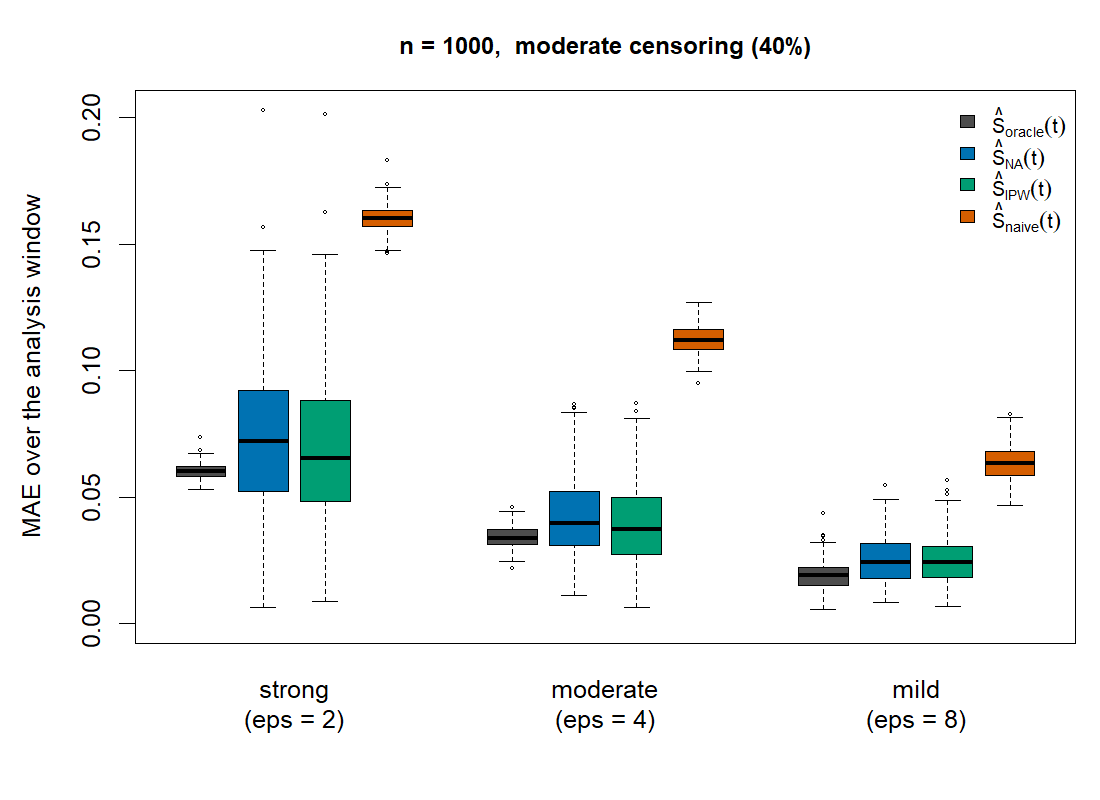}
    \caption{Moderate censoring ($40\%$)}
    \label{fig:mae1000mod}
  \end{subfigure}
  \hfill
  \begin{subfigure}[t]{0.48\textwidth}
    \centering
    \includegraphics[width=\linewidth]{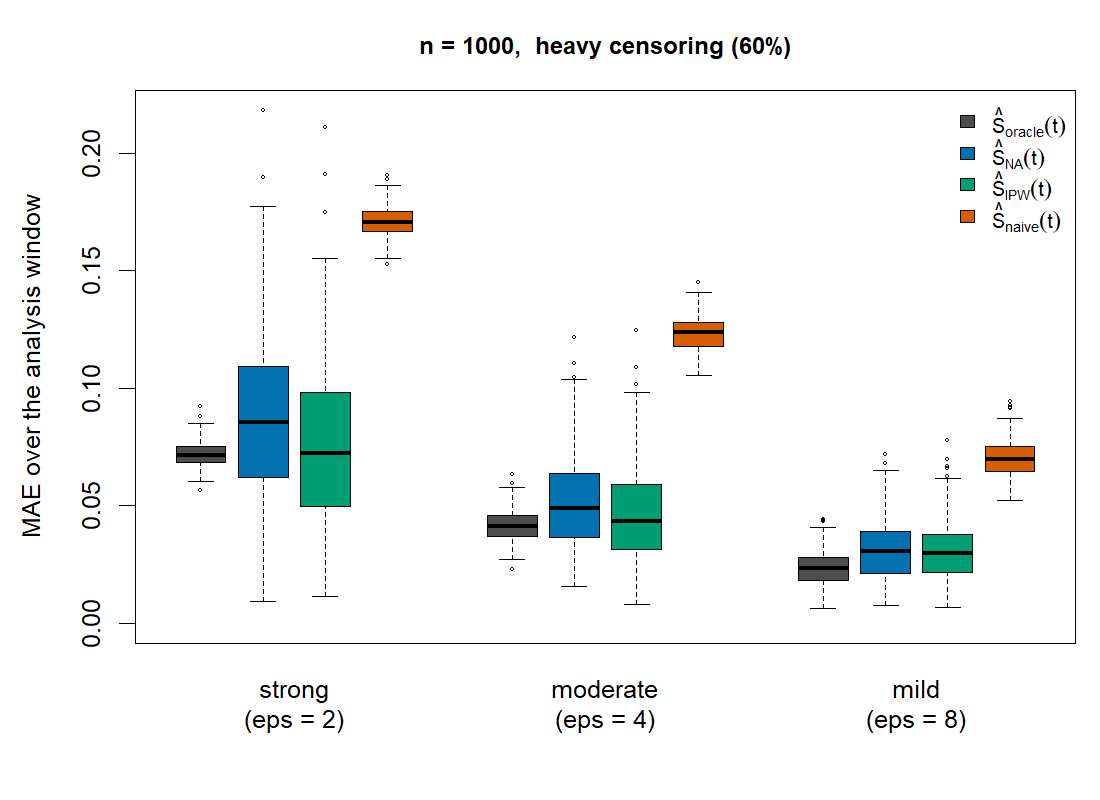}
    \caption{Heavy censoring ($60\%$)}
    \label{fig:mae1000heavy}
  \end{subfigure}
  \caption{Simulation I: Boxplots of the mean absolute error (MAE) for survivor functions over $300$ replicates with $n=1000$}
  \label{fig:mae1000}
\end{figure}

\begin{figure}[htbp]
  \centering
  \begin{subfigure}[t]{0.48\textwidth}
    \centering
    \includegraphics[width=\linewidth]{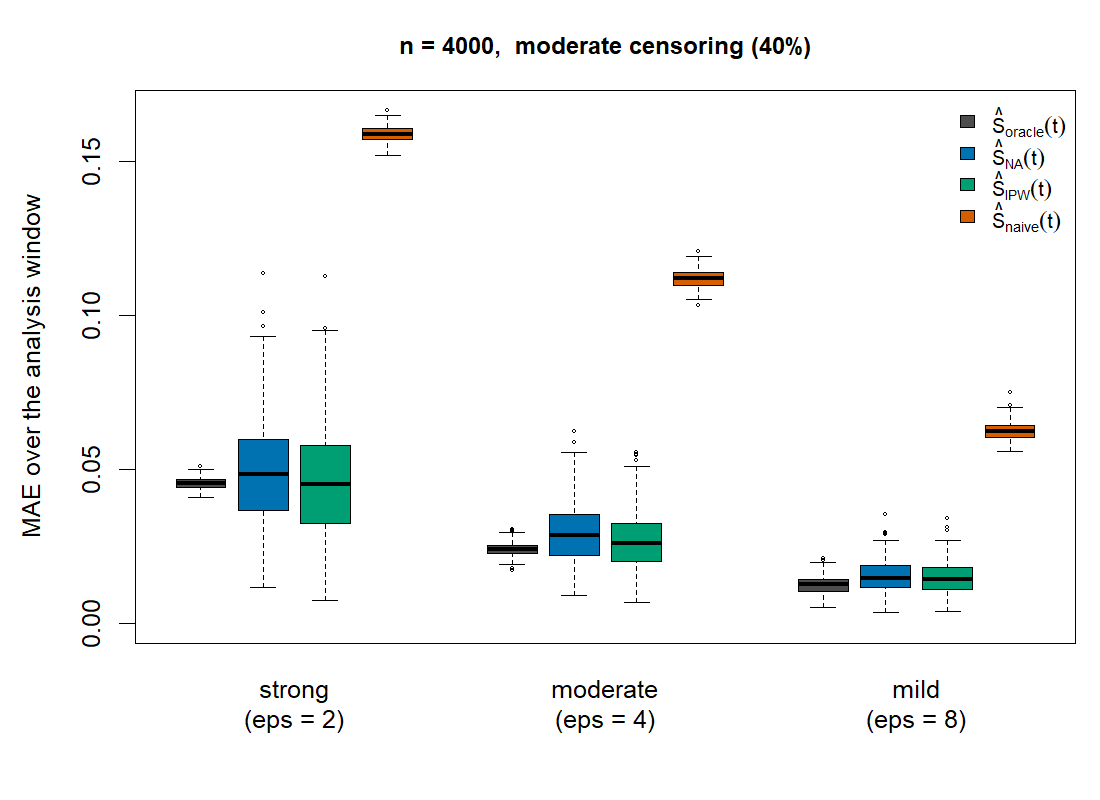}
    \caption{Moderate censoring ($40\%$)}
    \label{fig:mae4000mod}
  \end{subfigure}
  \hfill
  \begin{subfigure}[t]{0.48\textwidth}
    \centering
    \includegraphics[width=\linewidth]{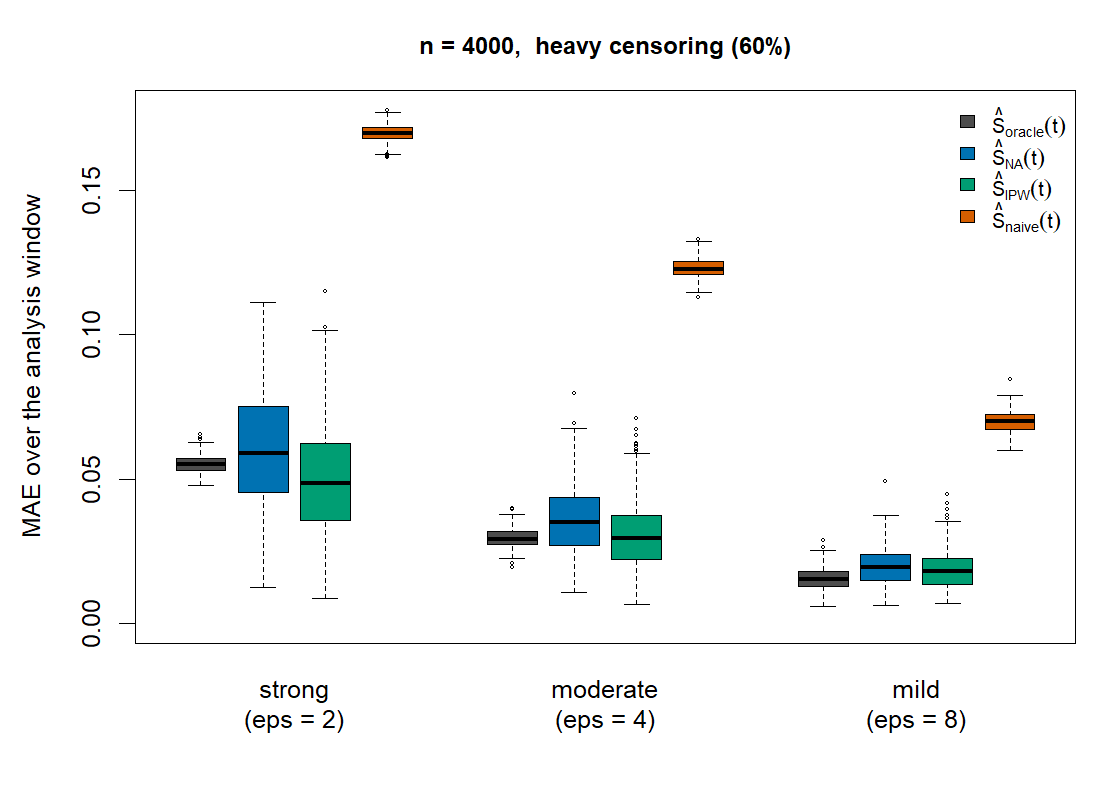}
    \caption{Heavy censoring ($60\%$)}
    \label{fig:mae4000heavy}
  \end{subfigure}
  \caption{Simulation I: Boxplots of the mean absolute error (MAE) for survivor functions over $300$ replicates with $n=4000$}
  \label{fig:mae4000}
\end{figure}

\begin{figure}[htbp]
  \centering
  \begin{subfigure}[t]{0.47\textwidth}
    \centering
    \includegraphics[width=\linewidth]{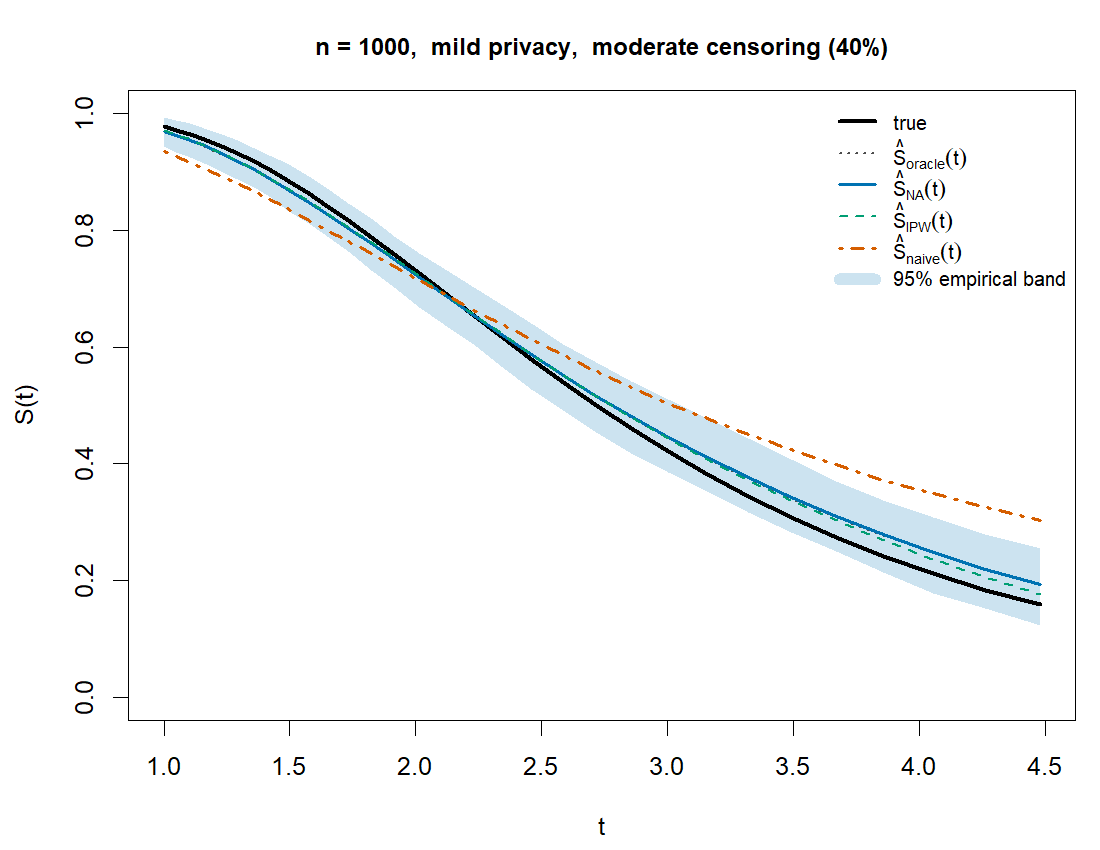}
    \caption{Mild privacy ($\epsilon_{\mathrm{time}}=8$), moderate censoring}
  \end{subfigure}
  \hfill
  \begin{subfigure}[t]{0.47\textwidth}
    \centering
    \includegraphics[width=\linewidth]{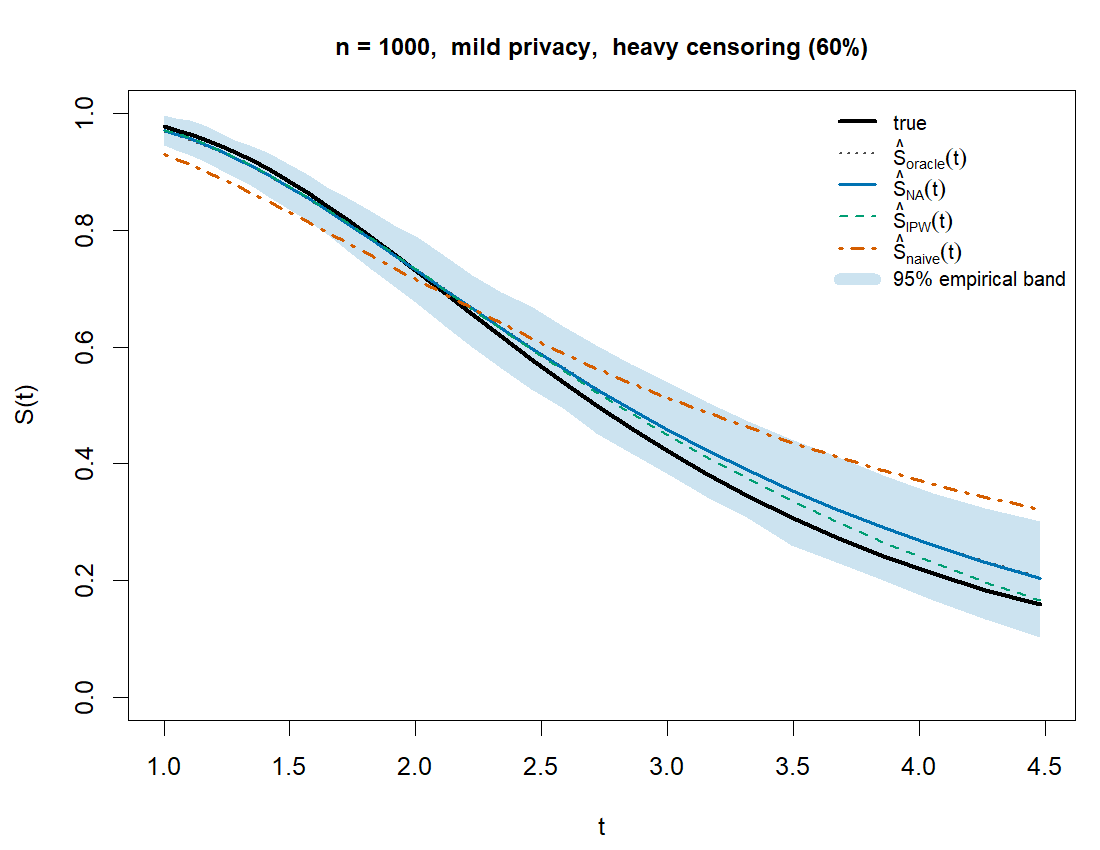}
    \caption{Mild privacy ($\epsilon_{\mathrm{time}}=8$), heavy censoring}
  \end{subfigure}

  \vspace{0.6em}
  \begin{subfigure}[t]{0.47\textwidth}
    \centering
    \includegraphics[width=\linewidth]{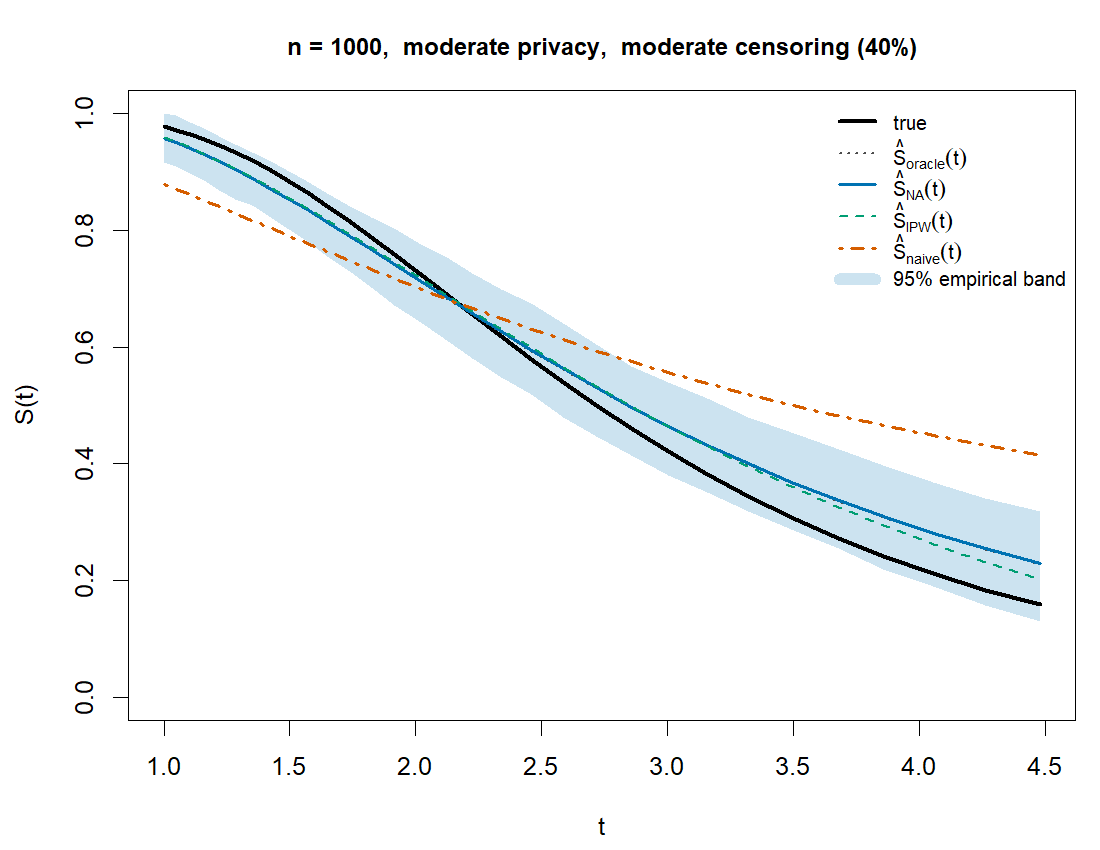}
    \caption{Moderate privacy ($\epsilon_{\mathrm{time}}=4$), moderate censoring}
  \end{subfigure}
  \hfill
  \begin{subfigure}[t]{0.47\textwidth}
    \centering
    \includegraphics[width=\linewidth]{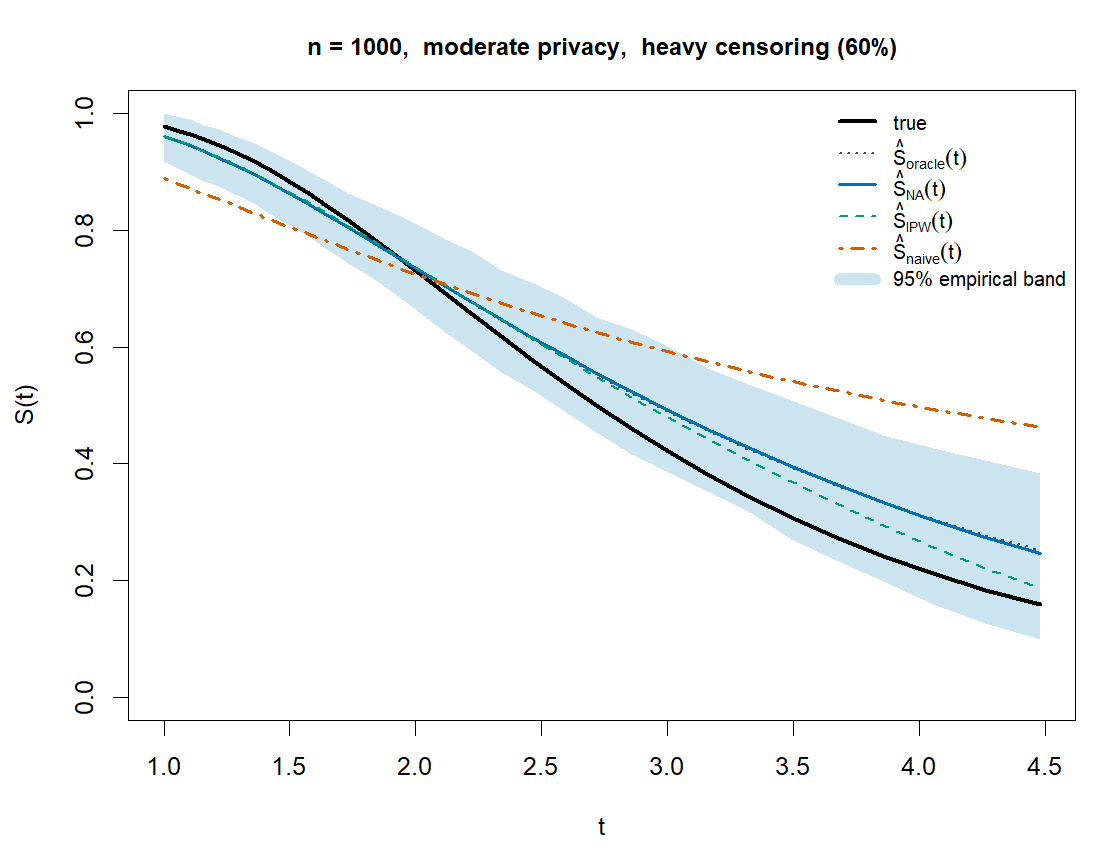}
    \caption{Moderate privacy ($\epsilon_{\mathrm{time}}=4$), heavy censoring}
  \end{subfigure}

  \vspace{0.6em}
  \begin{subfigure}[t]{0.47\textwidth}
    \centering
    \includegraphics[width=\linewidth]{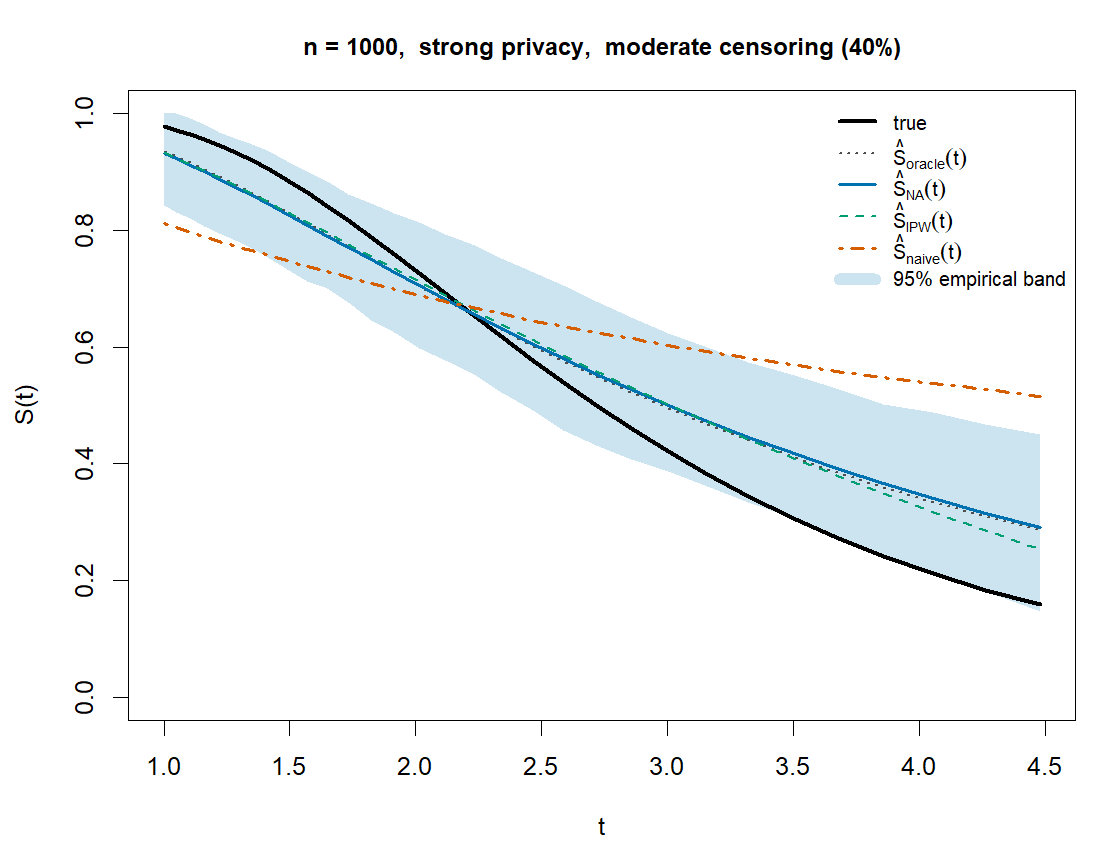}
    \caption{Strong privacy ($\epsilon_{\mathrm{time}}=2$), moderate censoring}
  \end{subfigure}
  \hfill
  \begin{subfigure}[t]{0.47\textwidth}
    \centering
    \includegraphics[width=\linewidth]{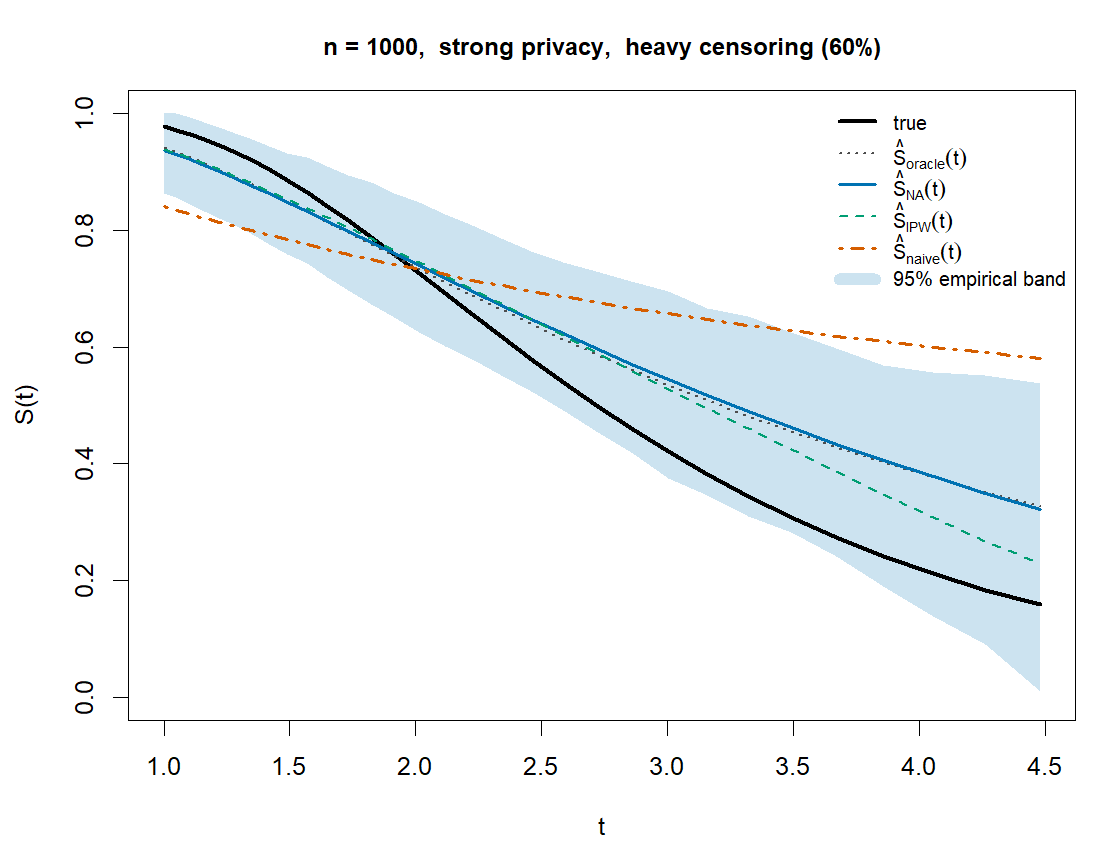}
    \caption{Strong privacy ($\epsilon_{\mathrm{time}}=2$), heavy censoring}
  \end{subfigure}

  \caption{Simulation I: Estimated survivor functions and approximated 95\% confidence bands for $300$ runs, $n=1000$.}
  \label{fig:curves1000}
\end{figure}
 \begin{figure}[htbp]
  \centering
  \begin{subfigure}[t]{0.47\textwidth}
    \centering
    \includegraphics[width=\linewidth]{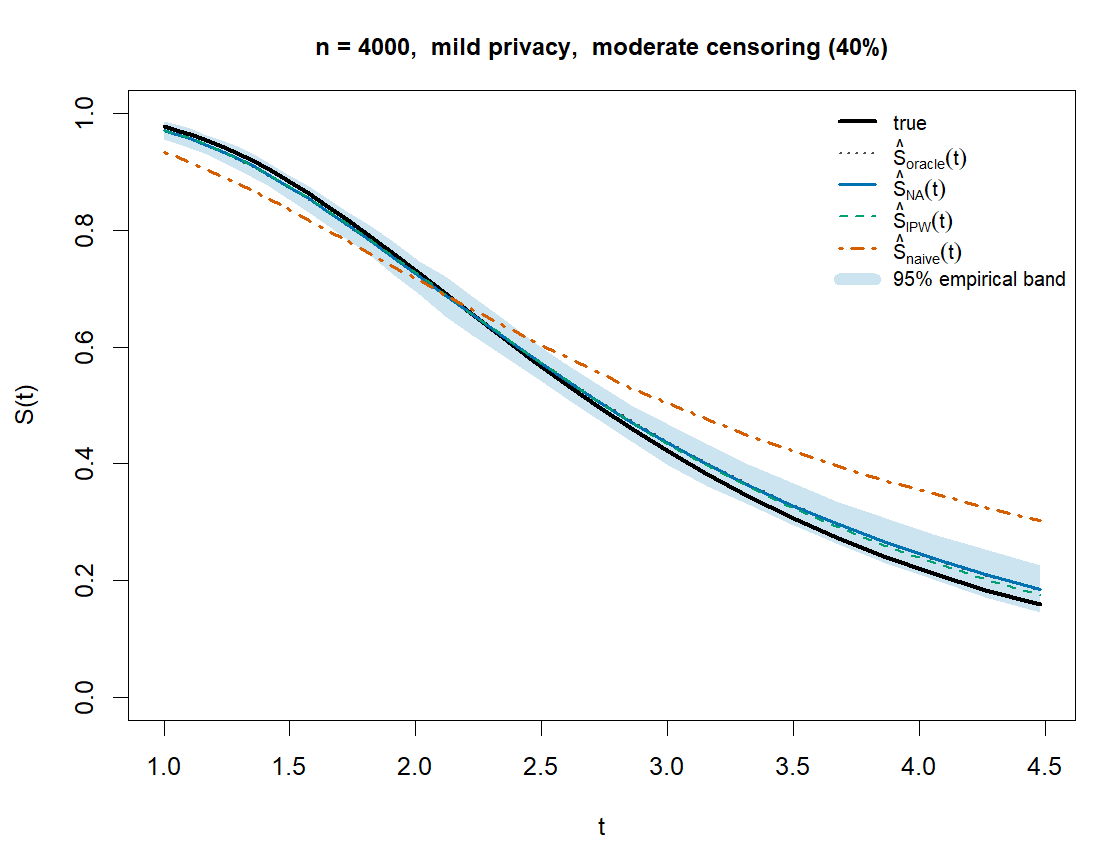}
    \caption{Mild privacy ($\epsilon_{\mathrm{time}}=8$), moderate censoring}
  \end{subfigure}
  \hfill
  \begin{subfigure}[t]{0.47\textwidth}
    \centering
    \includegraphics[width=\linewidth]{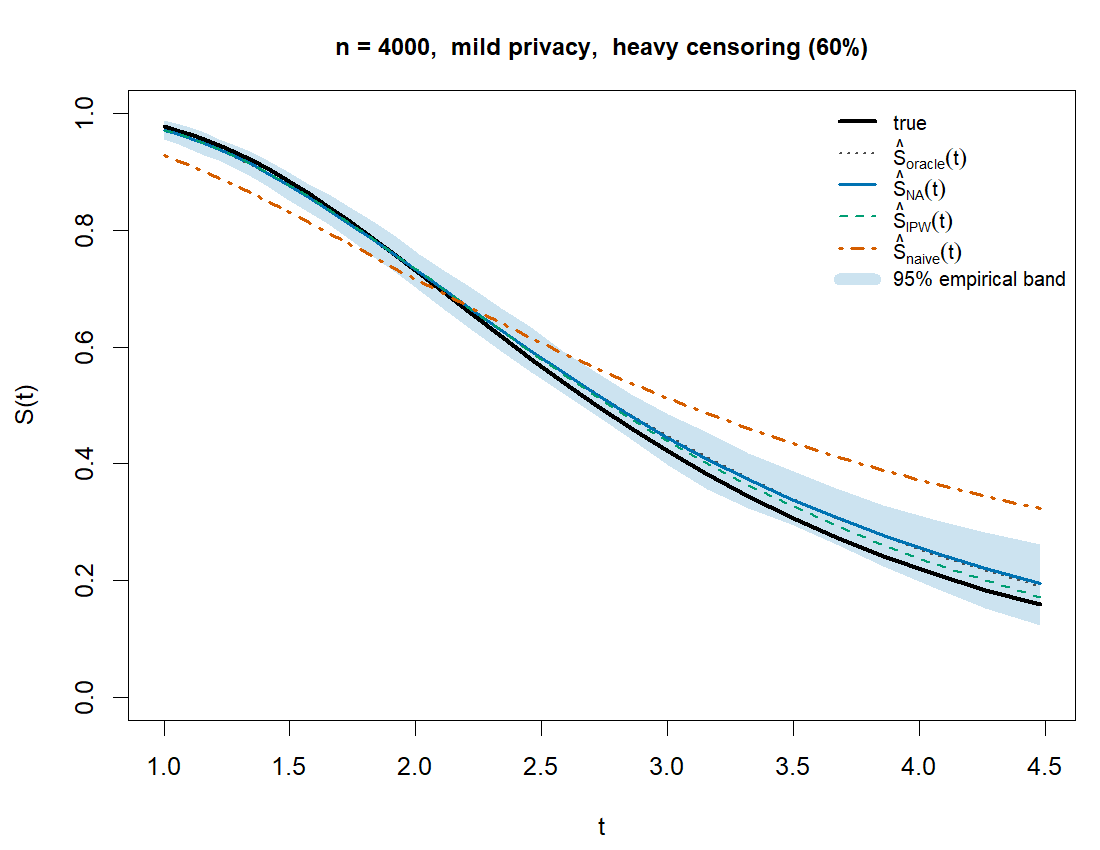}
    \caption{Mild privacy ($\epsilon_{\mathrm{time}}=8$), heavy censoring}
  \end{subfigure}

  \vspace{0.6em}
  \begin{subfigure}[t]{0.47\textwidth}
    \centering
    \includegraphics[width=\linewidth]{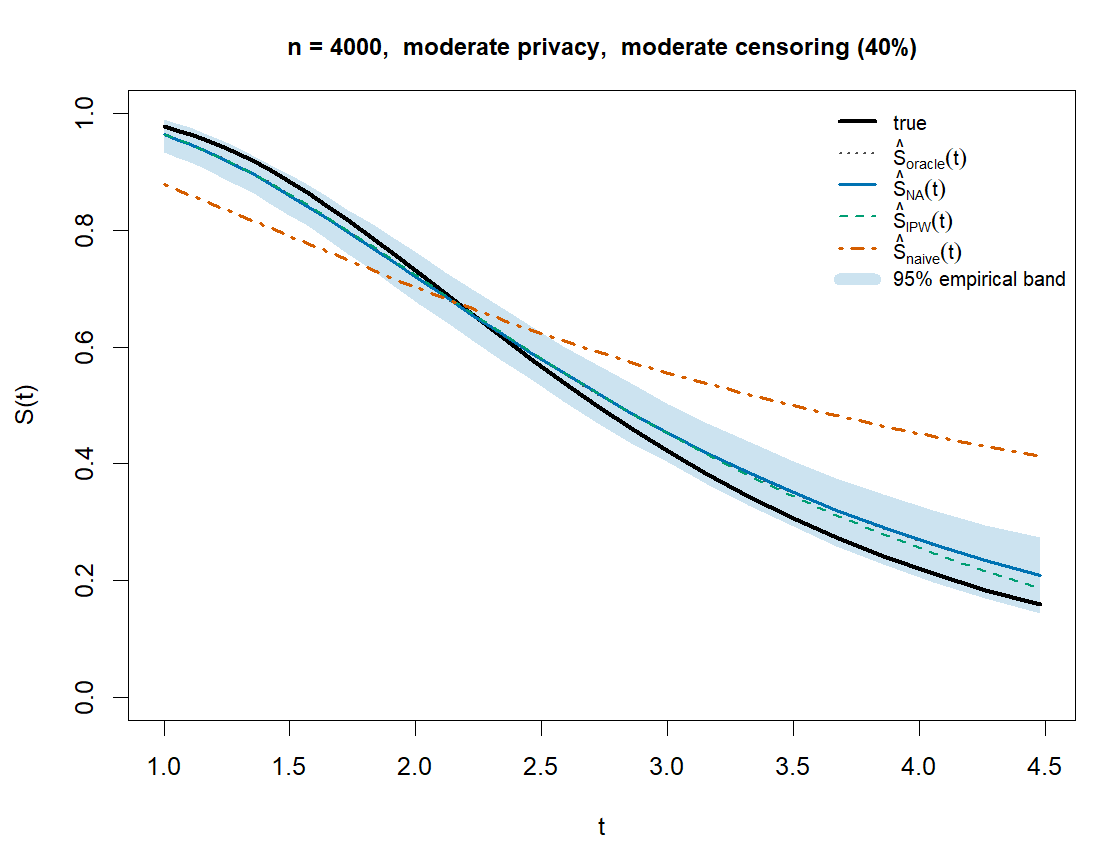}
    \caption{Moderate privacy ($\epsilon_{\mathrm{time}}=4$), moderate censoring}
  \end{subfigure}
  \hfill
  \begin{subfigure}[t]{0.47\textwidth}
    \centering
    \includegraphics[width=\linewidth]{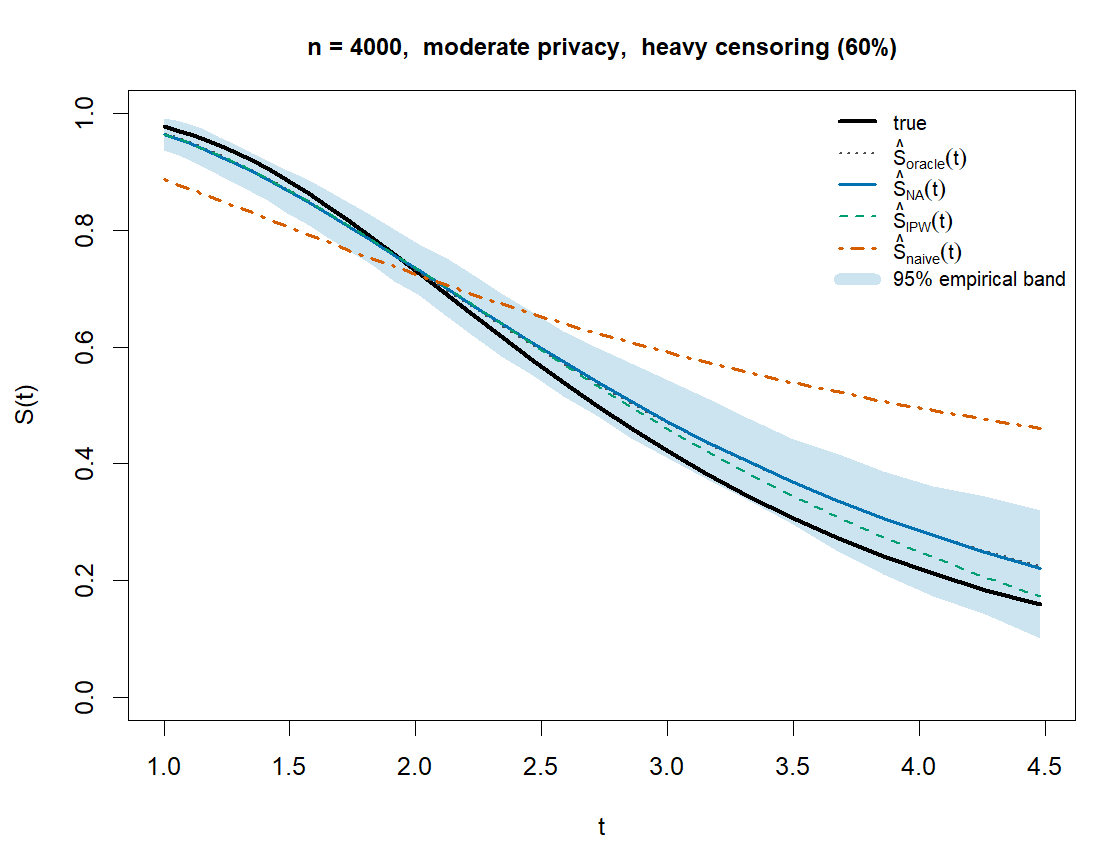}
    \caption{Moderate privacy ($\epsilon_{\mathrm{time}}=4$), heavy censoring}
  \end{subfigure}

  \vspace{0.6em}
  \begin{subfigure}[t]{0.47\textwidth}
    \centering
    \includegraphics[width=\linewidth]{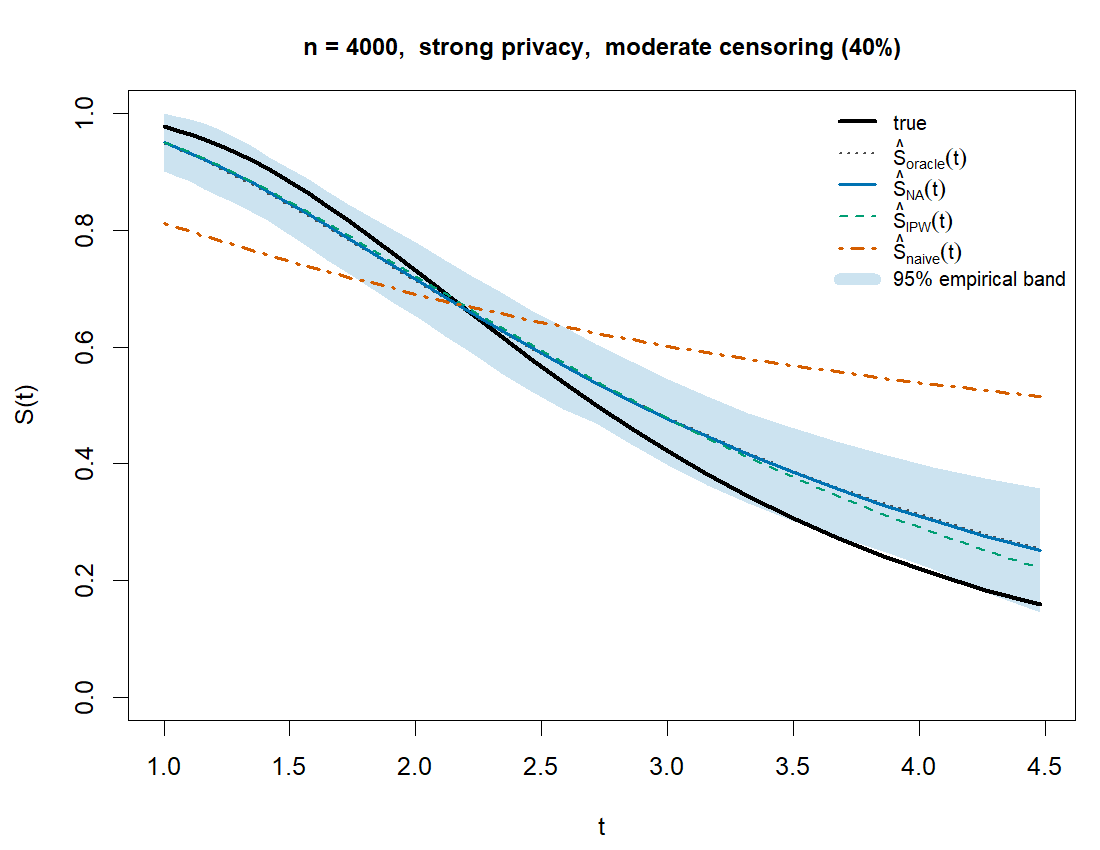}
    \caption{Strong privacy ($\epsilon_{\mathrm{time}}=2$), moderate censoring}
  \end{subfigure}
  \hfill
  \begin{subfigure}[t]{0.47\textwidth}
    \centering
    \includegraphics[width=\linewidth]{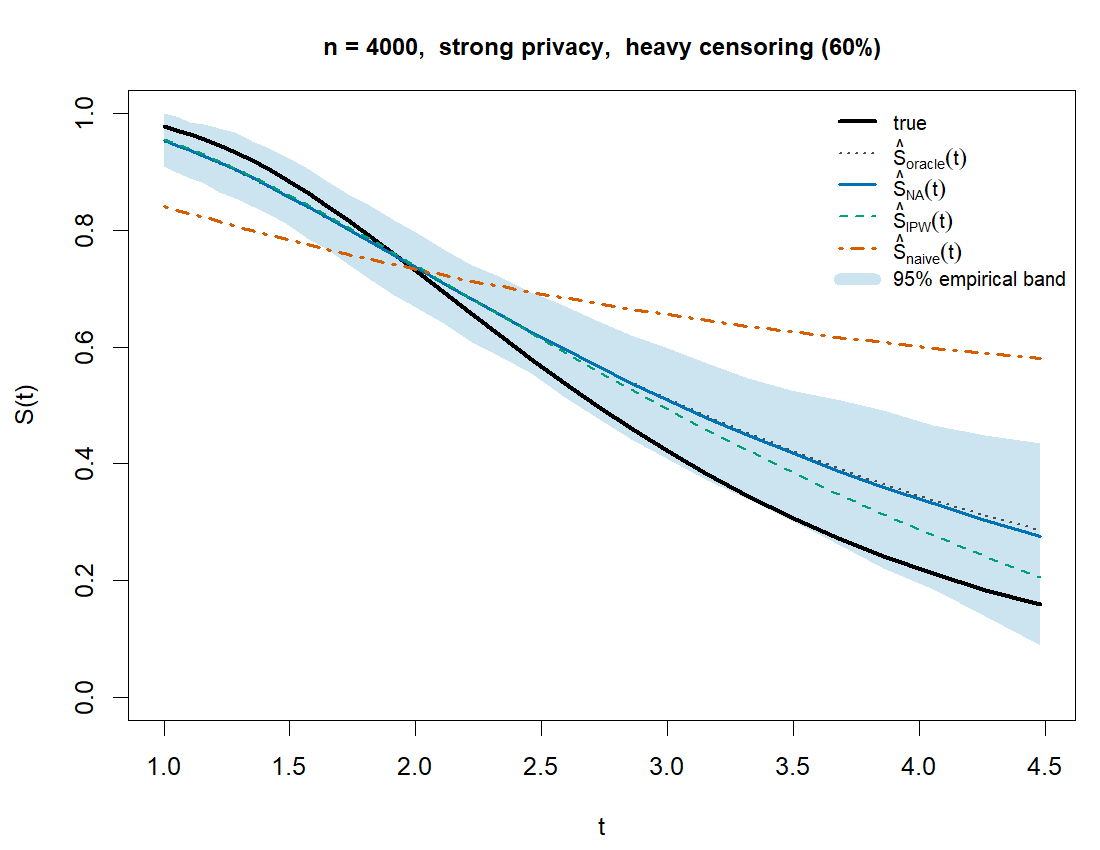}
    \caption{Strong privacy ($\epsilon_{\mathrm{time}}=2$), heavy censoring}
  \end{subfigure}

  \caption{Simulation I: Estimated survivor functions and approximated 95\% confidence bands for $300$ runs, $n=4000$. }
  \label{fig:curves4000}
\end{figure}

\subsection{Simulation II}
In this simulation, we assess the impact of bandwidth on the proposed estimator $\widehat{S}_{\mathrm{NA}}(\cdot)$. We
used the same data generating mechanism and the same three privacy budgets as
in Simulation I and chose the moderate censoring to simulate the data. We evaluated $\widehat{S}_{\mathrm{NA}}(\cdot)$ over a grid of bandwidths. With the true survivor function, $S_T(\cdot)$, we computed MAE and the maximum bias, which is $\sup_{t\in\mathcal{T}}|\widehat{S}_{\mathrm{NA}}(t)-S_{T}(t)|$. We compared the bandwidths selected by minimizing these two criteria with the bandwidth $h=(b^{4}/n)^{1/7}$ recommended in Remark~\ref{rm:bw}.

Table~\ref{tab:sim2} presents the theoretical bandwidth and the sample mean of bandwidths selected to minimize MAE and the supremum bias in each run across 400 repetitions. The empirical bandwidths of the two criteria are close to each
other and both are close to $h_{\mathrm{opt}}$. At $n=1000$ the bandwidth
$h_{\mathrm{opt}}$ coincides exactly with the sample mean of the empirical bandwidths under all three privacy budgets. At $n=4000$ the empirical bandwidth selected by MAE is slightly below $h_{\mathrm{opt}}$
while the one selected via the supremum bias agrees with $h_{\mathrm{opt}}$ in two of the three scenarios. Figure \ref{fig:fig3bw} plots the sample means of MAE and the supremum bias on a grid of bandwidths and compares it with the selected bandwidths. The agreement of two sets of bandwidths confirms the optimal bandwidth selected by Theorem~\ref{thm:main}.

\begin{table}[htbp]\centering\small
\caption{Simulation II: theoretically recommended bandwidth $h_{\mathrm{opt}}=(b^{4}/n)^{1/7}$ and the sample mean of empirical bandwidths that minimize MAE and sup-bias over 400 runs}
\label{tab:sim2}
\begin{tabular}{ll c cc}\toprule
& &{theoretical bandwidth} & \multicolumn{2}{c}{mean empirical bandwidths} \\
\cline{4-5}\\
$n$ & privacy & $h_{\mathrm{opt}}$ & MAE&sup-bias \\ \midrule
1000 & strong & 0.461  & 0.461 & 0.461\\
 & moderate & 0.310& 0.310 & 0.310 \\
 & mild & 0.209  & 0.209 & 0.209\\
\midrule
4000 & strong & 0.378 & 0.318 & 0.378 \\
 & moderate & 0.254& 0.214 & 0.214 \\
 & mild & 0.171 & 0.144 & 0.171\\
\bottomrule\end{tabular}\end{table}

\begin{figure}[htbp]
    \centering
    \begin{subfigure}[b]{0.32\textwidth}
        \centering
        \includegraphics[width=\textwidth]{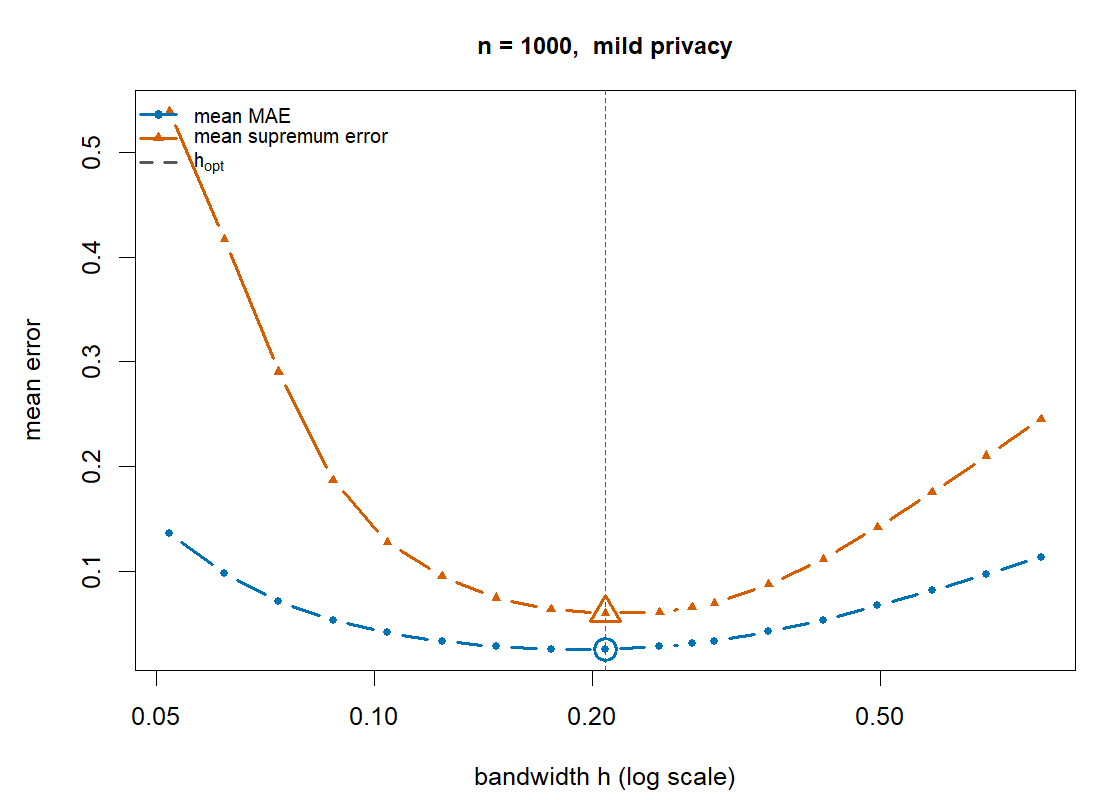}
        \caption{Mild privacy, $n=1000$}
        \label{fig:sub1}
    \end{subfigure}
    \hfill
    \begin{subfigure}[b]{0.32\textwidth}
        \centering
        \includegraphics[width=\textwidth]{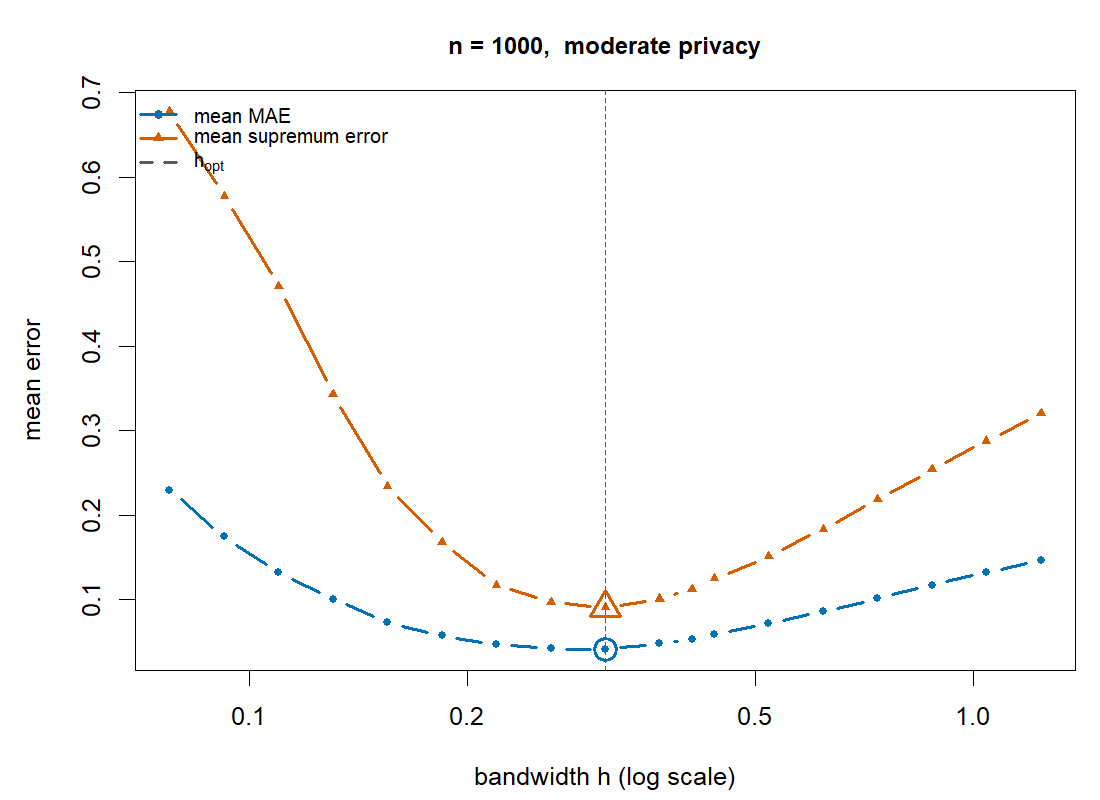}
        \caption{Moderate privacy, $n=1000$}
        \label{fig:sub2}
    \end{subfigure}
    \hfill
    \begin{subfigure}[b]{0.32\textwidth}
        \centering
        \includegraphics[width=\textwidth]{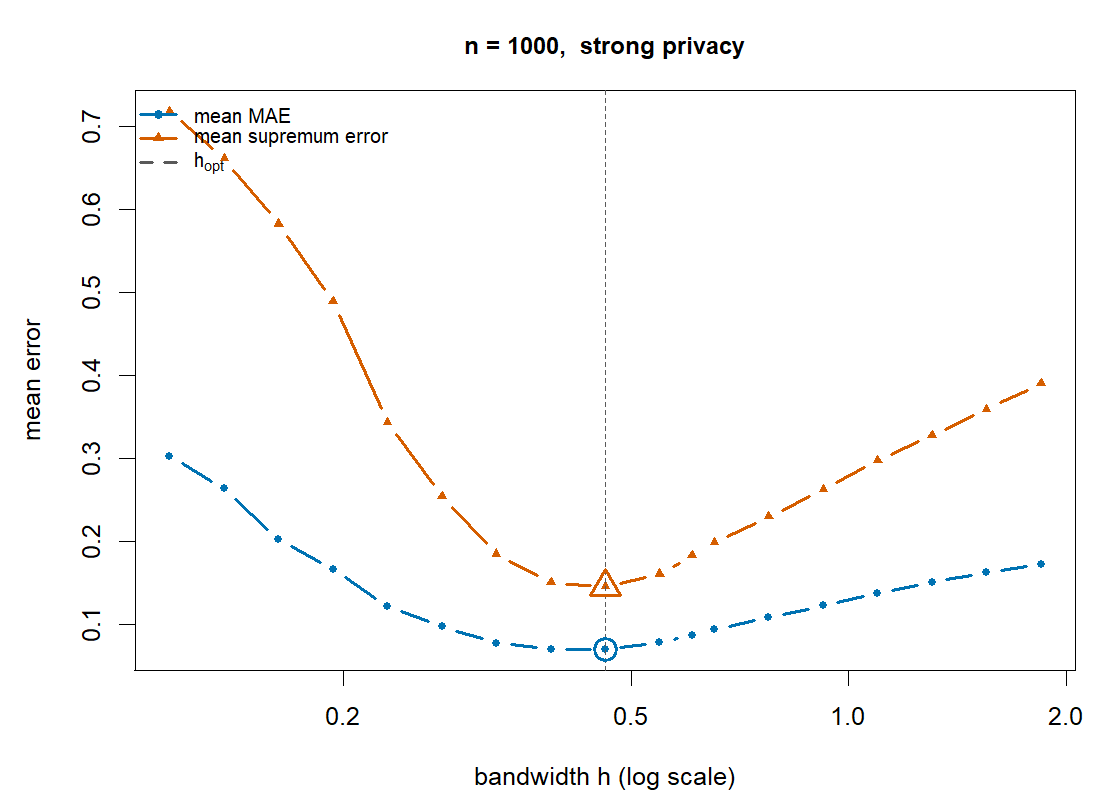}
        \caption{Strong privacy, $n=1000$}
        \label{fig:sub3}
    \end{subfigure}
    
    \vspace{1em} 
    \begin{subfigure}[b]{0.32\textwidth}
        \centering
        \includegraphics[width=\textwidth]{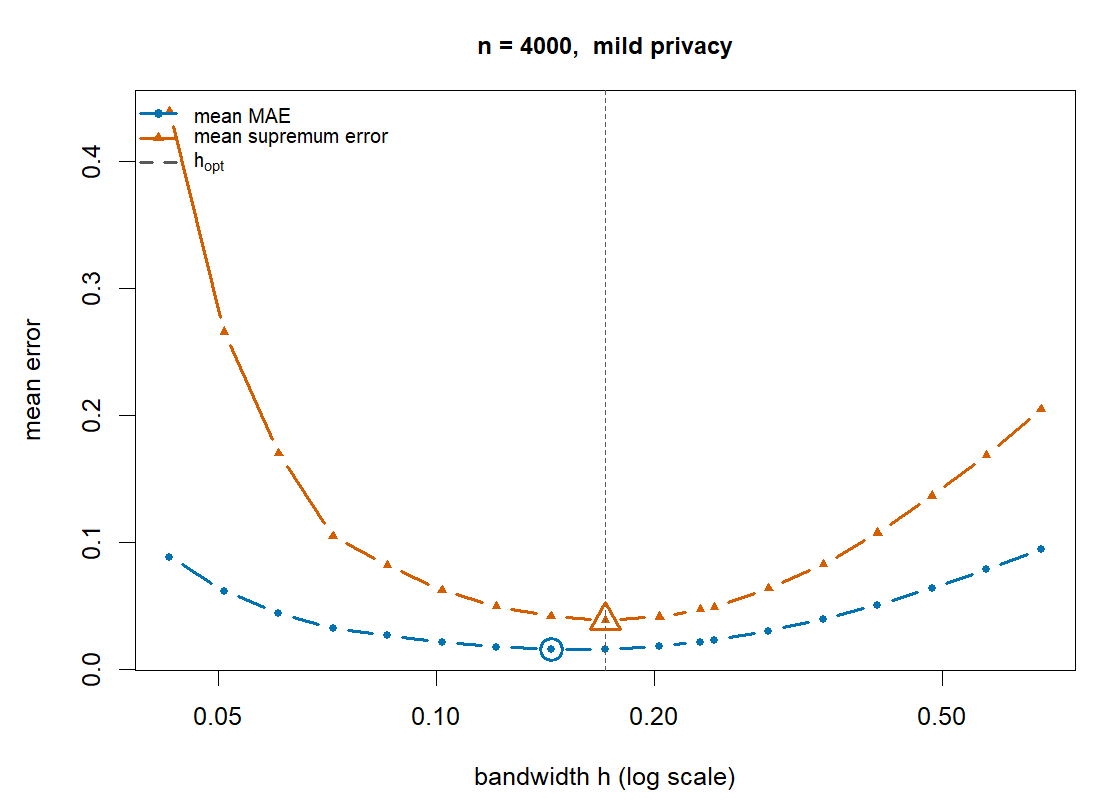}
        \caption{Mild privacy, $n=4000$}
        \label{fig:sub4}
    \end{subfigure}
    \hfill
    \begin{subfigure}[b]{0.32\textwidth}
        \centering
        \includegraphics[width=\textwidth]{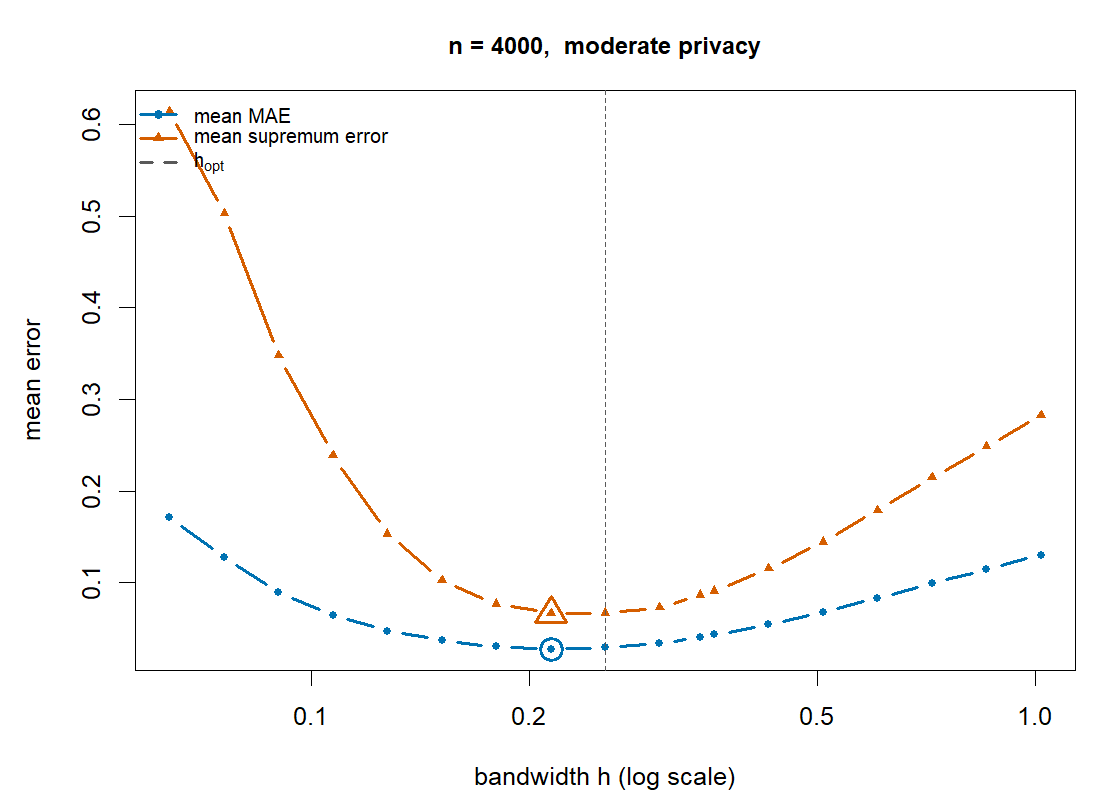}
        \caption{Moderate privacy, $n=4000$}
        \label{fig:sub5}
    \end{subfigure}
    \hfill
    \begin{subfigure}[b]{0.32\textwidth}
        \centering
        \includegraphics[width=\textwidth]{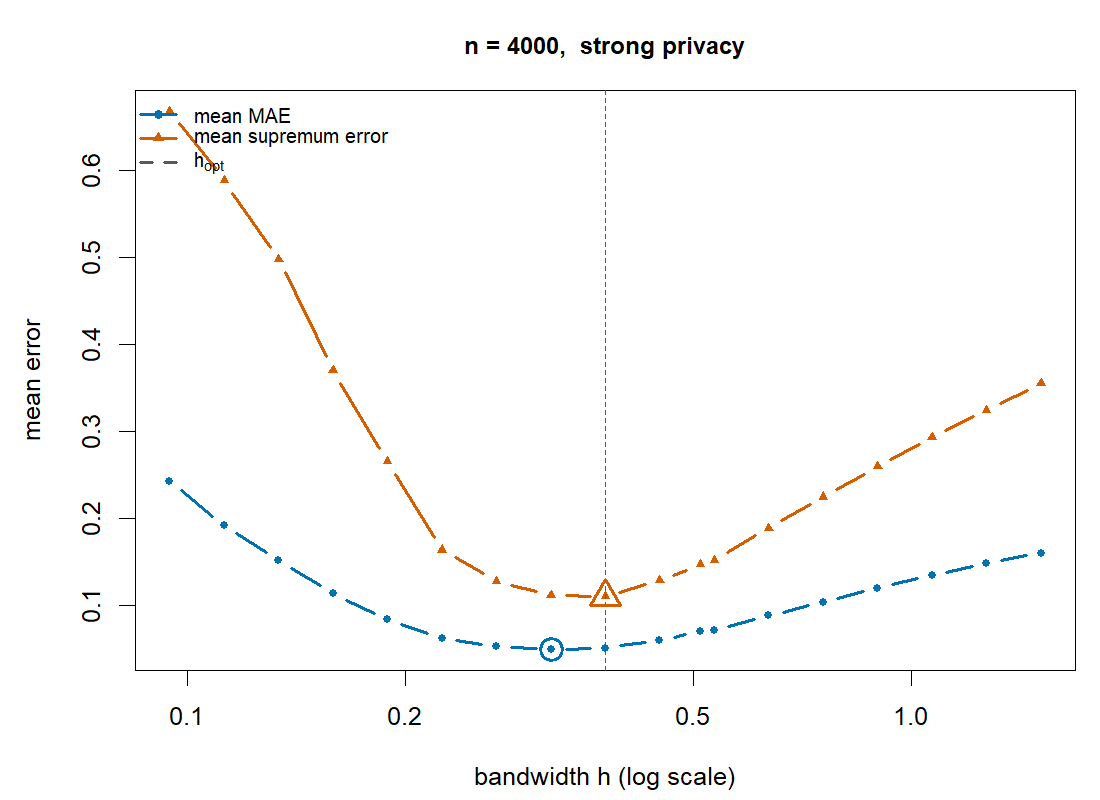}
        \caption{Strong privacy, $n=4000$}
        \label{fig:sub6}
    \end{subfigure}

    \caption{Simulation II: Theoretical bandwidths (the vertical dashed line) and empirical bandwidths}
    \label{fig:fig3bw}
\end{figure}
\section{Application to the METABRIC Data} \label{sec:rda}
We illustrate the proposed method on a semi-synthetic example built from a publicly available breast cancer dataset for Molecular Taxonomy of Breast Cancer International Consortium (METABRIC) cohort, {which was a Canada-UK Project containing targeted sequencing data of primary breast cancer samples and was collected by Cambridge Research Institute and the British Columbia Cancer Centre in Canada (\citealp{Pereira2016}).} We obtained the $1390$ patients with tumour
stage I, II and III. These three subgroups comprise $475$, $800$ and
$115$. The event of
interest is death from any cause. Given follow-up ranges
from $0.1$ to $351$ months, the event times are subject to right censoring with with censoring rates $54.7\%$, $39.8\%$ and $25.2\%$ for three subgroups
respectively and an overall censoring rate is $42\%$, which is close to the scenario in our simulation study with a moderate censoring rate. 

Because the METABRIC data are not privatized per se, we construct a synthetic version of the dataset by adopting LDP mechanism following \eqref{eq:perturb-mech}. For each patient, we consider the observed event times within $[\exp(2.4),\exp(5.4)]$. We computed $\Delta=3.0$, which is the width of the domain and the maximum difference in the records of log-transformed times. We then added
$\mathrm{Laplace}(0,b)$ perturbation to each log-transformed observed times with $b=\Delta/\epsilon_{\mathrm{time}}$. We set
$\epsilon_{\mathrm{time}}=8$ and $2$ for mild and strong privacy budget and thus the corresponding values of parameter $b$ are $b=\Delta/\epsilon_{\mathrm{time}}=0.38$ and $b=1.51$. The event indicator
was privatized by symmetric randomized response with $p_{1}=0.881$ and $p_{0}=0.119$, a
value within the range used by Google's RAPPOR system. The event indicators were flipped for about $11\%$ patients. 

We repeated such privatization procedure for $50$ times and computed four estimators of survivor functions for each tumor stage. Since the non-private data is available to us, we can evaluate two gold standards: the Kaplan--Meier estimator, reported with its Greenwood confidence band,
and the oracle kernel estimator $\widehat{S}_{\mathrm{oracle}}(\cdot)$ defined in
(\ref{S_oracle}) with bandwidth selected using Theorem~\ref{thm:main}. The remaining two estimators use the synthetic LDP data: the
proposed $\widehat{S}_{\mathrm{NA}}(\cdot)$ of \eqref{eq:SNAhat} with point-wise $95\%$ confidence intervals from
the multiplier bootstrap of Theorem~\ref{thm:boot} with $B=400$ draws, and the naive estimator, which is the Kaplan--Meier estimator that does not correct bias from the synthetic dataset. 

We display the four estimated survivor functions in Figures~\ref{fig:metabricmild} and
\ref{fig:metabricstrong}. Both the proposed estimator, $\hat{S}_{\mathrm{NA}}(\cdot)$ and the naive estimator, are close to the two gold standards under mild privacy. However, the two estimators are clearly separated when the privacy restriction gets tight. The naive estimator gets flatten and deviates from the two gold standards across all three tumor stages, whereas the proposed estimator closely follow the shape of the Kaplan--Meier curve with the true dataset. Table \ref{tab:metabricmae} summarizes the mean MAEs of the proposed estimator and the naive estimators over 50 privacy-protected released data. With both mild and strong privacy scenarios, the proposed estimator has much smaller MAEs than the naive estimator, particularly with a strong privacy budget. Our de-biased procedure therefore matters most precisely where local
differential privacy is deployed at the tight budgets. We see that the confidence band with proposed estimator gets wider when privacy tightens. This agrees with the results in simulation study and indicates that the cost of the de-biasing procedure is in variance estimation. We also compute the MAE for an oracle kernel estimator. The minor MAEs of the oracle estimator indicated the validity of the bandwidth given in Theorem~\ref{thm:main} to balance the smoothing bias and the convergence error. 
\begin{figure}[htbp]
  \centering
  \setlength{\tabcolsep}{2pt}
  \begin{tabular}{@{}ccc@{}}
    \includegraphics[width=0.32\textwidth]{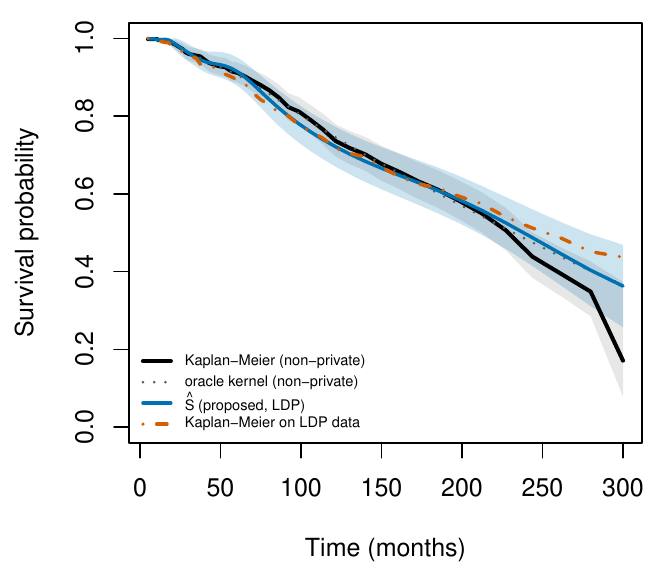} &
    \includegraphics[width=0.32\textwidth]{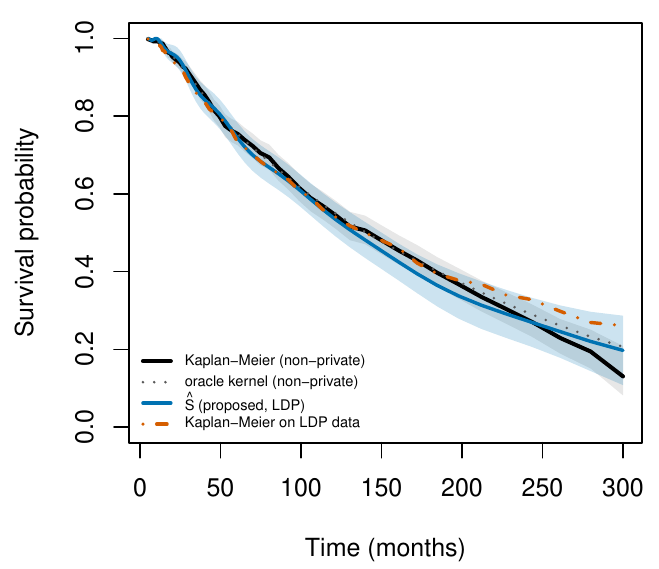} &
    \includegraphics[width=0.32\textwidth]{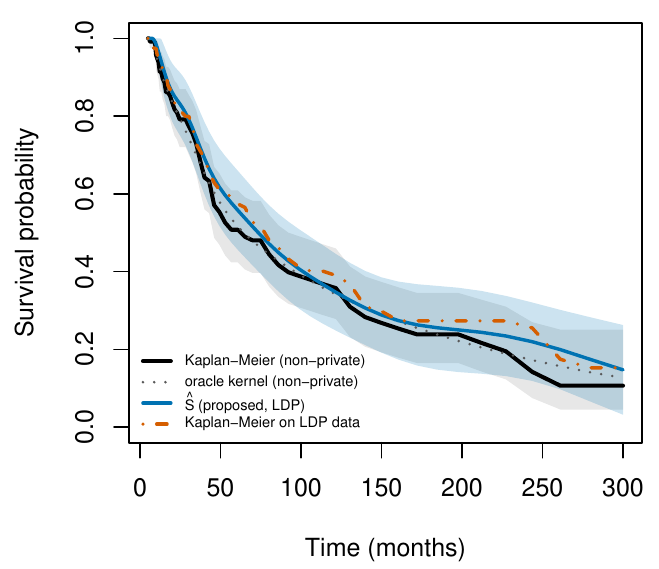}\\[-2pt]
    {\small (a) Stage I ($n=475$)} &
    {\small (b) Stage II ($n=800$)} &
    {\small (c) Stage III ($n=115$)}
  \end{tabular}
  \caption{METABRIC data under \emph{mild} privacy
  ($\epsilon_{\mathrm{time}}=8$).}
  \label{fig:metabricmild}
\end{figure}

\begin{figure}[htbp]
  \centering
  \setlength{\tabcolsep}{2pt}
  \begin{tabular}{@{}ccc@{}}
    \includegraphics[width=0.32\textwidth]{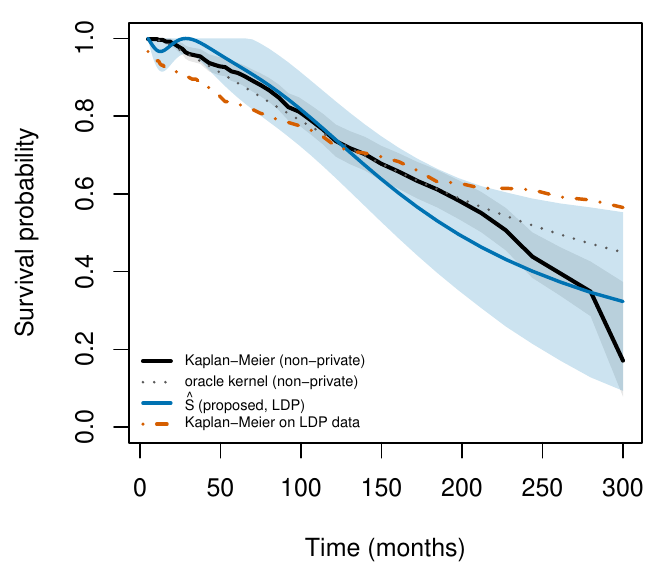} &
    \includegraphics[width=0.32\textwidth]{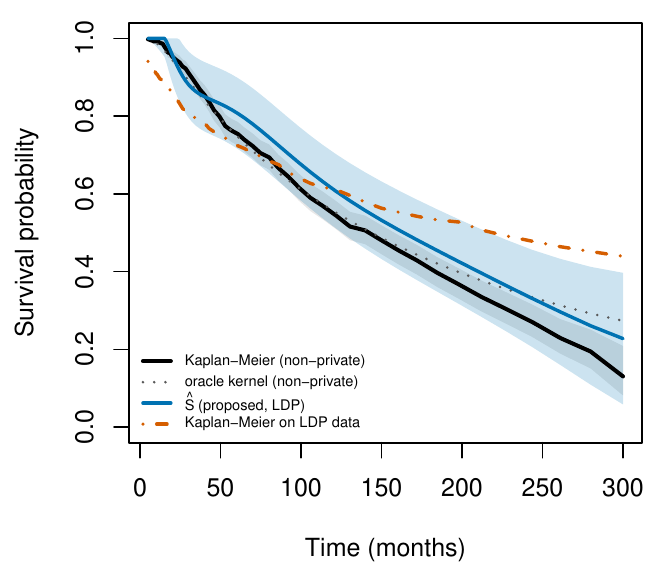} &
    \includegraphics[width=0.32\textwidth]{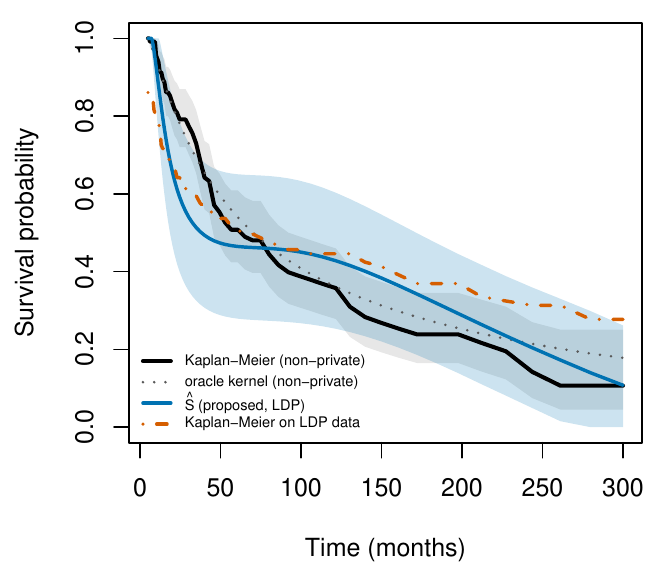}\\[-2pt]
    {\small (a) Stage I ($n=475$)} &
    {\small (b) Stage II ($n=800$)} &
    {\small (c) Stage III ($n=115$)}
  \end{tabular}
  \caption{METABRIC data under \emph{strong} privacy
  ($\epsilon_{\mathrm{time}}=2$)}
  \label{fig:metabricstrong}
\end{figure}
\begin{table}[htbp]
  \centering\small
  \caption{Mean and standard deviation of the MAE for the proposed estimator $\hat{S}_{\mathrm{NA}}(\cdot)$ and the na\"iv estimator over $50$ independent
  locally private releases of the METABRIC data }
  \label{tab:metabricmae}
  \begin{tabular}{llcccc}
  \toprule
  & &{$\widehat{S}_{\mathrm{oracle}}(\cdot)$} &  $\widehat{S}_{\mathrm{NA}}(\cdot)$
    &$\widehat{S}_{\mathrm{naive}}(\cdot)$ & \\
  \cmidrule(lr){3-3}\cmidrule(lr){4-4}\cmidrule(lr){5-5}
  Setting & Stage &  & mean (SD) & mean (SD)\\
  \midrule
   Mild
          & Stage I   & 0.0083 & 0.0136 (0.0025) & 0.0184 (0.0033) \\
  ($b=0.378$)
          & Stage II  & 0.0061 & 0.0120 (0.0022) & 0.0180 (0.0028)\\
          & Stage III & 0.0127 & 0.0265 (0.0064) & 0.0265 (0.0053) \\
          
  \addlinespace
 
  Strong
          & Stage I   & 0.0181 & 0.0387 (0.0125) & 0.0770 (0.0077)\\
  ($b=1.510$)
          & Stage II  & 0.0164 & 0.0379 (0.0090) & 0.0917 (0.0055) \\
          & Stage III & 0.0257 & 0.0745 (0.0363) & 0.1035 (0.0144)\\
  \bottomrule
  \end{tabular}
\end{table}
\section{Discussion and Conclusion} \label{sec:summary}
Protecting individual-level privacy inevitably trades privacy against
statistical utility. This is particularly strong for time-to-event data under local differential privacy to gain strong protection because each time record is perturbed with noise and each event indicator has also been flipped with a known probability. Applying conventional survival analysis methods to such privatized data without explicit debiasing leads to severe inferential bias and their bias does not vanish as the sample size grows.

This paper proposes a novel nonparametric deconvolution estimator for the survivor function with right-censored records released under Local Differential Privacy (LDP) to address the issue of inferential bias with privatized data. Our estimator makes two contributions. First, the estimator reconciles privacy with valid statistical inference and it corrects the two-layer biases at the level of the distribution via deconvolution kernels removing the biases on the distribution of event times, and moment reconstruction removing the bias from the flipped indicators. Our uniform consistent estimator is inspired by the Nelson--Aalen estimator so it requires neither knowledge nor estimation of the censoring distribution. Because it uses only the released data, it is a post-processing of the privatized records and inherits the privacy guarantee imposed by LDP data sharing mechanism. With our proposed estimator, privacy is guaranteed at data collection, while the analyst still obtains a consistent estimator with quantified uncertainty. Second, the derived uniform convergence rate provides a protocol for balancing privacy and
utility. The uniform convergence rate depends on the sample size and the privacy
budget, so the cost of privacy at the growth of sample size can be stated explicitly. Under the Laplace mechanism, halving the privacy budget requires
about sixteen times as many records to maintain the same accuracy. Data holders
can thus weigh a stronger privacy guarantee against the additional data needed
to preserve utility, and plan the privacy budget and sample size jointly when designing a mechanism for data sharing.

Several limitations point to future work. First, the method assumes that the
perturbation distribution is known because the releasing mechanism is usually published. Although our implementation uses the closed-form deconvolution
kernel for Laplace perturbation, the procedure and theory apply to any ordinarily
smooth perturbation, for which only the deconvolution kernel $\mathcal{K}_{h}(\cdot)$ needs to be recomputed. Second, our method relies on the independence assumption of censoring times and event times. While this assumption routinely holds under administrative censoring, future research should generalize this framework to more complex survival structures, such as dependent censoring and competing risks. Finally, extending our debiasing framework to regression analysis of time-to-event data would be a potential avenue for future research to allow assessing risk with data under LDP.

	\section*{Acknowledgments}
This research was supported
by grants from the National Science and Technology Council (114-2118-M-004-010-MY3) and the National Science Foundation (2144684). Any opinions, findings, and conclusions or recommendations expressed in this material are those of the author(s) and do not necessarily reflect the views of the  National Science and Technology Council and the National Science Foundation.

\bibliography{references}


\clearpage
\setcounter{footnote}{1}
\setcounter{page}{1}   

\baselineskip=28pt
\begin{center}
{\LARGE{\bf Supplementary Materials of ``De-biased Distribution Estimation for Locally Differential Private Survival Data''}}
\end{center}
\baselineskip=14pt
\vskip 2mm

\begin{center}
Yi Xiong$^1$, Li-Pang Chen$^2$, and Liyue Fan$^3$\\
\ \\
$^1$Department of Biostatistics, The State University of New York (SUNY) at Buffalo\\
$^2$Department of Statistics, National Chengchi University\\
$^3$Department of Computer Science, University of North Carolina at Charlotte
\end{center}

This supplementary material outlines our derivation for the asymptotic properties of the proposed estimator for Theorems 1 and 2 of Section 2.3.
\section*{A. Proof of Theorem 1}
\begin{proof}
We establish the uniform convergence rate of $\widehat{S}_{\mathrm{NA}}(t)$ over $\mathcal{T}$ by linearizing the functional mapping, evaluating the smoothing bias and applying empirical process bounds to the stochastic fluctuations. Consider $t \in \mathcal{T}$. We write $\ell_0 := \log t_0'$ and $\kappa := \inf_{u \le \log \tau'} H_Y(u) = H_Y(\log \tau') \ge \eta > 0$ by Assumption~\ref{as:trunc}. Let $C^*, C^*_1, C^*_2, \dots$ represent the finite and positive constants that are independent of $n$, $h$ and $b$.

\medskip

We observed that $\widehat{S}_{\mathrm{NA}}(t) = \exp\{ -\widehat{\Lambda}(\log t) \}$ is a Nelson--Aalen-type estimator, where $\widehat{\Lambda}(\log t) = \int_{\ell_0}^{\log t} \frac{d\widehat{H}_1(u)}{\widehat{H}_Y(u)}$. By Assumption~\ref{as:trunc}, the mapping $(H_1, H_Y) \mapsto \exp\left(-\int_{\ell_0}^{\cdot} H_Y^{-1} dH_1\right)$ is Hadamard differentiable on the space of bounded variation functions bounded below by $\inf_{u} H_Y(u) \ge \kappa > 0$, and a first-order functional Taylor's expansion yields
\begin{equation}
  \widehat{S}_{\mathrm{NA}}(t) - S_T(t) = - S_T(t) \int_{\ell_0}^{\log t} \frac{d\widehat{H}_1(u) - dH_1(u)}{H_Y(u)} + S_T(t) \int_{\ell_0}^{\log t} \frac{\widehat{H}_Y(u) - H_Y(u)}{H_Y^2(u)} dH_1(u) + R_n(t),
  \label{eq:full_hadamard_proof}
\end{equation}
where $d\widehat{H}_1(u) = \widehat{g}_1(u) du$, $H_1(u) = g_1(u)du$, and $R_n(t)$ is the remainder. Note the left hand side of \eqref{eq:full_hadamard_proof} can be written as
\[
 \widehat{S}_{\mathrm{NA}}(t) - S_T(t) = [\widehat{S}_{\mathrm{NA}}(t) - \mathbb{E}\widehat{S}_{\mathrm{NA}}(t)] + [\mathbb{E}\widehat{S}_{\mathrm{NA}}(t) - S_T(t) ],
\]
where the first term is a stochastic error and the second is the smoothing bias. We will provide the bound for each of them.

\noindent{\bf The stochastic error}
The stochastic term is
\begin{equation}
\label{eq:stochastic1}
  \widehat{S}_{\mathrm{NA}}(t) - \mathbb{E}\widehat{S}_{\mathrm{NA}}(t) = - S_T(t) \int_{\ell_0}^{\log t} \frac{\widehat{g}_1(u) - \mathbb{E}\widehat{g}_1(u)}{H_Y(u)} du + S_T(t) \int_{\ell_0}^{\log t} \frac{\widehat{H}_Y(u) - \mathbb{E}\widehat{H}_Y(u)}{H_Y^2(u)} g_1(u) du + \widetilde{R}_n(t),
\end{equation}
where $\widetilde{R}_n(t) = R_n(t) - \mathbb{E}R_n(t)$. With the generic notation of one realization $o^* = (y^*, \widehat{\delta})$, let $\xi(t; o^*) = \int_{\ell_0}^{\log t} \mathbf{W}_t(u)^\top \boldsymbol{\psi}(u; o^*)\, du$, where $\boldsymbol{\psi}(u; o^*) = \begin{pmatrix} \mathcal{K}_h(u - y^*) \cdot \widehat{\delta} \\ \int_u^\infty \mathcal{K}_h(v - y^*) \, dv\end{pmatrix}$, $\mathbf{W}_t(u) \triangleq \begin{pmatrix} -\frac{S_T(t)}{H_Y(u)} \\ \frac{S_T(t)g_1(u)}{H_Y^2(u)} \end{pmatrix}=\begin{pmatrix} W_{t,1}(u)\\ W_{t,2}(u)\end{pmatrix}$. Further, let $\mathrm{P}_n$ and $\mathrm{P}$ be the empirical measure and probability measure of $n$ i.i.d. observations $\mathcal{O}^*_i=(Y^*_i,\widehat{\delta}_i)$ for $i=1,\cdots,n$. We can present \eqref{eq:stochastic1} as
\begin{align}
\label{eq: stochastic2}
& \frac{1}{n}\sum_{i=1}^n \int_{\ell_0}^{\log t} \mathbf{W}_t(u)^\top \boldsymbol{\psi}(u;\mathcal{O}^*_i) \, du-\int_{\ell_0}^{\log t} \mathbf{W}_t(u)^\top \mathbb{E}[\boldsymbol{\psi}(u; \mathcal{O}^*_i)] du+\widetilde{R}_n(t) \nonumber \\
=\;&\mathrm{P}_n\,\xi(t;\cdot)-\mathrm{P}\,\xi(t;\cdot)+\widetilde{R}_n(t) \nonumber \\
=\;&\Big(\mathrm{P}_n-\mathrm{P}\Big)\xi(t;\cdot)+\widetilde{R}_n(t) .
\end{align}
Define the function class $\mathcal{F}_{n}=\{o^*\mapsto\xi(t;o^*):t\in\mathcal{T}\}$, and write $\|\mathrm{P}_n-\mathrm{P}\|_{\mathcal{F}_n}=\sup_{t\in\mathcal{T}}|(\mathrm{P}_n-\mathrm{P})\xi(t;\cdot)|$. Since $\xi(t;\cdot)$ depends on $h$ and $h$ shrinks according to $n$, we incorporate $n$ in the notation of function class. We will show $\mathcal{F}_{n}$ is a class of Vapnik--Chervonenkis(VC) type and will apply Talagrand's inequality in the form given by \citet[Theorem~2.3]{gine2002rates} to derive the bound. To show that $\mathcal{F}_{n}$ is a VC class, we will establish three uniform bounds: the envelope, the covering number, and the variance of $\xi(t;\mathcal{O}^*_i)$.

Let $\overline{\mathcal{K}}_{h}(z)=\int_{-\infty}^{z} \mathcal{K}_{h}(v)\,dv$. For Laplace perturbation with scale parameter $b$, the characteristic function is $\varphi_e(w) = (1 + b^2 w^2)^{-1}$, which satisfies Assumption~\ref{as:noise} with smoothness index $\beta = 2$ and $c_e = b^{-2}$. Hence, under Assumption~\ref{as:kernel}(i) and (iii), $\overline{\mathcal{K}}_h(z) = \overline{K}\left(\frac{z}{h}\right) - \frac{b^2}{h^2} K'\left(\frac{z}{h}\right)$, where $\overline{K}(w) = \int_{-\infty}^w K(v) dv$. We then have $\|\overline{\mathcal{K}}_h\|_\infty \le C^*_1 \left( 1 + \frac{b^2}{h^2} \|K'\|_\infty \right) \le C^*_1 b^2 h^{-2}$ with constant $C^*_1$, for $h$ small.

Recall the individual functional:
\[
\xi(t; o^*) = \int_{\ell_0}^{\log t} \left[ W_{t,1}(u) \mathcal{K}_h(u - y^*) \widehat{\delta} + W_{t,2}(u) \int_u^{\infty} \mathcal{K}_h(v - y^*) \, dv \right] du.
\]
Because $W_{t,1}, W_{t,2}$ and the derivative $W_{t,1}'$ are uniformly bounded on $\mathcal{T}$ by Assumptions~\ref{as:trunc} and \ref{as:smooth}, $\widehat{\delta}$ is bounded by Assumption~\ref{as:ancillary}, and $\overline{K}, K'$ are bounded functions,
the envelope of $\mathcal{F}_n$ satisfies
\begin{equation}
F := \sup_{t \in \mathcal{T}} \|\xi(t; \cdot)\|_\infty \le C^*_1 \|\overline{\mathcal{K}}_h\|_\infty \le C^*_1 b^2 h^{-2}.
\label{eq:envelope_laplace}
\end{equation}

Further, the class $\mathcal{F}_{n}$ is indexed by the single
real parameter $t$ in a compact interval $\mathcal{T}$ and
$t\mapsto\xi(t;o^*)$ is Lipschitz uniformly in $o^*$ because its first derivative is bounded, that is,
\[
L := \sup_{o^*}\sup_{t \in \mathcal{T}} \left| \frac{\partial \xi(t; o^*)}{\partial t}\right| \le C^*_3 \left( \|\overline{\mathcal{K}}_h\|_\infty + \|\mathcal{K}_h\|_\infty \right) \le C^*_3 b^2 h^{-3}
\]
is finite. By this Lipschitz property, an $\nu/L$-net over $\mathcal{T}$ induces an $\nu$-cover of $\mathcal{F}_n$ under the supremum norm $\Vert{}\cdot\Vert{}_\infty$. Since the supremum norm dominates the $L_2(Q)$ norm for any probability measure $Q$, this net also forms an $\nu$-cover of $\mathcal{F}_n$ in $L_2(Q)$. The cardinality of such a net requires at most $\vert{}\mathcal{T}\vert{}L/\nu$ elements. Following the presentation in \citet{gine2002rates}, we define $N\left(\mathcal{F}_n, L_2(Q), \nu F \right)$ as the $\nu$-covering number of the metric space $(\mathcal{F}_n, L_2(Q))$. Given the cardinality we derived for $\nu$-cover of $\mathcal{F}_n$ and the envelope's bound derived in \eqref{eq:envelope_laplace}, we have
\begin{equation}
\label{eq:covernumber}
\sup_{Q} N\left(\mathcal{F}_n, L_2(Q), \nu F \right) \le \frac{ \vert{}\mathcal{T}\vert{} L}{\nu F}  = \frac{A_n}{\nu}, \qquad A_n := \frac{C^*_3\vert{}\mathcal{T}\vert{}}{C^*_1}h^{-1},
\end{equation}
so that $\mathcal{F}_n$ is of VC type with index $v=1$.

To apply Talagrand's inequality, we also need to derive the uniform bound for the variance of $\xi(t;\mathcal{O}^*_i)$. We observe that $\frac{b^2}{h^2} K'\left(\frac{z}{h}\right)$ in $\overline{\mathcal{K}}_h(z)$ is localized on a window of width $O(h)$ about $z = 0$, whereas $\overline{K}(z/h)$ is bounded by one. Since the density of $Y^*$ is bounded by Assumption~\ref{as:smooth}, integrating the squared envelope over this $h$-width window gains one power of $h$, so that
\begin{equation}
\sigma_n^2 := \sup_{t \in \mathcal{T}} \mathrm{P}\,\xi(t; \cdot)^2 \le C^*_4 \left\{1+ \left( b^2 h^{-2} \right)^2 \cdot h\right\} \le C^*_4 b^4 h^{-3},
\label{eq:variance_laplace}
\end{equation}
the last step because $b^4h^{-3}\to\infty$ under Assumption~\ref{as:bw}; note also $\sigma_n\le F$, as required. With the uniform bounds given in envelope function, covering number and weak variance, \citet[Theorem~2.3]{gine2002rates} gives
\begin{equation}
  \E\|\mathrm{P}_{n}-\mathrm{P}\|_{\mathcal{F}_{n}}
  \le C^*\left\{\left(\frac{\sigma_n^{2}\log(A_nF/\sigma_n)}{n}\right)^{1/2}
  +\frac{F\log(A_nF/\sigma_n)}{n}\right\}.
  \label{eq:talagrand}
\end{equation}
By \eqref{eq:envelope_laplace}, \eqref{eq:covernumber} and \eqref{eq:variance_laplace}, $\frac{A_nF}{\sigma_n} \le \frac{C^{*}_5 h^{-3}}{\sqrt{C^{*}_4} b^2 h^{-3/2}} = C^{*}_6 b^{-2}h^{-3/2}$. Therefore, $\log\left( \frac{A_nF}{\sigma_n} \right) = O(\log(1/h)) = O(\log n)$. The two terms of \eqref{eq:talagrand} are then of order $\{b^{4}\log n/(nh^{3})\}^{1/2}$ and $b^2\log n/(nh^2)$, whose ratio is $(\log n/nh)^{1/2}$. This tends to zero under Assumption~\ref{as:bw}, because $nh^{3}/(b^4\log n)\to\infty$, hence $nh/\log n\to\infty$. The second term is therefore asymptotically negligible. By Markov's inequality applied to \eqref{eq:talagrand}, we have
\begin{equation}
\sup_{t \in \mathcal{T}} \left| (\mathrm{P}_n - \mathrm{P})\xi(t; \cdot) \right| = O_p\left( \sqrt{ \frac{b^4 \log n}{n h^3} } \right),
\label{eq:final_stochastic_rate}
\end{equation}
which is the second term given in $\rho_n$ in Theorem~\ref{thm:main}.

\noindent{\bf The smoothing bias}
The smoothing bias $\mathbb{E}\widehat{S}_{\mathrm{NA}}(t) - S_T(t)$ can be written as
\begin{equation}
\mathbb{E}\widehat{S}_{\mathrm{NA}}(t) - S_T(t) = - S_T(t) \int_{\ell_0}^{\log t} \frac{\mathbb{E}\widehat{g}_1(u) - g_1(u)}{H_Y(u)} du + S_T(t) \int_{\ell_0}^{\log t} \frac{\mathbb{E}\widehat{H}_Y(u) - H_Y(u)}{H_Y^2(u)} g_1(u) du + \mathbb{E} R_n(t).
\label{eq:bias_expansion}
\end{equation}
We will show that the proposed estimator with the deconvolution kernel $\mathcal{K}_h(\cdot)$ removes the bias due to Laplace perturbation in expectation, reducing the expectation to standard kernel smoothing over the true underlying distribution. First, we evaluate $\mathbb{E}\widehat{g}_1(u)$. Recall that $\widehat{g}_1(u) = \frac{1}{n} \sum_{i=1}^n \widehat{\delta}_i \mathcal{K}_h(u - Y_i^*)$, $\mathbb{E}\widehat{g}_1(u)$ thus is
\begin{align*}
\mathbb{E}\{\widehat{g}_1(u)\} &= \mathbb{E}\left\{ \mathcal{K}_h(u - Y^*_i) \widehat{\delta}_i \right\} \\
&= \mathbb{E}\left[ \mathbb{E}\left\{ \mathcal{K}_h(u - Y_i - e_i) \widehat{\delta}_i \,\middle|\, Y_i, \delta_i \right\} \right].
\end{align*}
Because the privatization mechanisms for $Y_i$ and $\delta_i$ are independent and $Y_i$ and $e_i$ are independent given Assumption~\ref{as:ancillary}, and $\E[\widehat{\delta}_{i}\mid\delta_{i}]=\delta_{i}$, we have
\begin{align*}
\mathbb{E}\left\{ \mathcal{K}_h(u - Y_i - e_i) \widehat{\delta}_i \,\middle|\, Y_i, \delta_i \right\} &= \mathbb{E}\{\widehat{\delta}_i \mid Y_i, \delta_i\} \cdot \mathbb{E}\left\{ \mathcal{K}_h(u - Y_i - e_i)|Y_i \right\} \\
&= \delta_i \cdot \mathbb{E}\left\{ \mathcal{K}_h(u - Y_i - e_i)|Y_i \right\}=\delta_i K_h(u-Y_i).
\end{align*}
Hence, $\mathbb{E}\{\widehat{g}_1(u)\}$ is reduced to a convolution with non-private data $\{(Y_i,\delta_i),i=1,\cdots,n\}$:
\begin{equation}
  \mathbb{E}\{\widehat{g}_1(u)\} = \mathbb{E}\left\{ \delta_i {K}_h(u - Y_i) \right\} = \int_{-\infty}^{\infty} K_h(u - v) g_1(v) dv = (K_h * g_1)(u).
  \label{eq:g1-conv}
\end{equation}
Under Assumptions ~\ref{as:smooth} and ~\ref{as:kernel}, we derive an $r$-th order Taylor expansion to $g_1(u)$ in \eqref{eq:g1-conv}, which is, 
\begin{equation}
  \mathbb{E}\widehat{g}_1(u) - g_1(u) = \frac{(-1)^r h^r}{r!} \mu_r(K) g_1^{(r)}(u) + o(h^r).
  \label{eq:g1_bias_derived}
\end{equation}
Similarly, for the estimator $\widehat{H}_Y(u) = \int_u^\infty \widehat{g}_Y(v) dv$ where $\widehat{g}_Y(v) = \frac{1}{n} \sum_{i=1}^n \mathcal{K}_h(v - Y_i^*)$, we have $\mathbb{E}\widehat{g}_Y(v) = (K_h * g_Y)(v)$. Integrating and expanding $g_Y$ to order $r$ yields:
\begin{equation}
  \mathbb{E}\widehat{H}_Y(u) - H_Y(u) = \int_u^\infty \left\{ \mathbb{E}\widehat{g}_Y(v) - g_Y(v) \right\} dv = \frac{(-1)^r h^r}{r!} \mu_r(K) H_Y^{(r)}(u) + o(h^r),
  \label{eq:HY_bias_derived}
\end{equation}
where $H_Y^{(r)}(u) \triangleq \frac{d^{r-1}}{du^{r-1}}[-g_Y(u)]$.

Substituting \eqref{eq:g1_bias_derived} and \eqref{eq:HY_bias_derived} into \eqref{eq:bias_expansion} yields the uniform bias bound:
\begin{equation}
  \sup_{t \in \mathcal{T}} \left| \mathbb{E}\widehat{S}_{\mathrm{NA}}(t) - S_T(t) \right| \le \sup_{t\in\mathcal{T}}|h^r B_S(t)| + o(h^r) + \sup_{t\in\mathcal{T}}|\mathbb{E}R_n(t)| = O(h^r)+\sup_{t\in\mathcal{T}}|\mathbb{E}R_n(t)|,
  \label{eq:step2_bias_final}
\end{equation}
where $B_S(t) \triangleq \frac{(-1)^{r+1} \mu_r(K)}{r!} S_T(t) \left[ \int_{\ell_0}^{\log t} \frac{g_1^{(r)}(u)}{H_Y(u)} du - \int_{\ell_0}^{\log t} \frac{H_Y^{(r)}(u) g_1(u)}{H_Y^2(u)} du \right]$ is bounded on $\mathcal{T}$ by Assumptions~\ref{as:trunc} and \ref{as:smooth}.

Combining the uniform bounds for stochastic error in \eqref{eq:final_stochastic_rate} and for the smoothing bias in \eqref{eq:step2_bias_final}, we have shown that
\[
 \sup_{t \in \mathcal{T}} \left|\widehat{S}_{\mathrm{NA}}(t) - S_T(t)\right|=O_p\!\left( \sqrt{ \frac{b^4 \log n}{n h^3} } +h^r\right)+\sup_{t\in\mathcal{T}}|R_n(t)|.
\]
In the following, we will show that $R_n(t)$ is negligible comparing with the stochastic error and the smoothing bias. 

\noindent{\bf The asymptotically negligible remainder} The remainder $R_n(t)$ in the functional Taylor expansion \eqref{eq:full_hadamard_proof} satisfies
\begin{equation}
\sup_{t \in \mathcal{T}} |R_n(t)| \le C^*_R \left( \|\widehat{H}_1 - H_1\|^2_\infty + \|\widehat{H}_Y - H_Y\|^2_\infty \right).
\label{eq:remainder_bound_def}
\end{equation}
The classes associated with $\widehat{H}_1$ and $\widehat{H}_Y$, namely $\{y^*\mapsto\widehat{\delta}\{\overline{\mathcal{K}}_h(u-y^*)-\overline{\mathcal{K}}_h(\ell_0-y^*)\}\}$ and $\{y^*\mapsto\overline{\mathcal{K}}_h(\infty)-\overline{\mathcal{K}}_h(u-y^*)\}$ indexed by $u\in[\ell_0,\log\tau']$, share the envelope \eqref{eq:envelope_laplace}, the covering bound \eqref{eq:covernumber} and the weak variance bound \eqref{eq:variance_laplace}. By triangle inequality, we therefore decompose the total uniform error of the component estimators into stochastic and bias parts:
\[
\|\widehat{H}_1 - H_1\|_\infty \le \|\widehat{H}_1 - \mathbb{E}\widehat{H}_1\|_\infty + \|\mathbb{E}\widehat{H}_1 - H_1\|_\infty = O_p\left( \sqrt{\frac{b^4 \log n}{n h^3}} \right) + O(h^r),
\]
and similarly for $\|\widehat{H}_Y - H_Y\|_\infty$. Substituting these uniform rates into \eqref{eq:remainder_bound_def} yields
\begin{equation}
\sup_{t \in \mathcal{T}} |R_n(t)| = O_p\left( h^{2r} + \frac{b^4 \log n}{n h^3} \right), \qquad \sup_{t\in\mathcal{T}}|\mathbb{E}R_n(t)| = O\left( h^{2r} + \frac{b^4 \log n}{n h^3} \right).
\label{eq:remainder_rate}
\end{equation}
Under Assumption~\ref{as:bw}, $n h^3 /(b^4\log n) \to \infty$ and $h \to 0$, so $h^{2r} = o(h^r)$ and $\frac{b^4 \log n}{n h^3} = o\left(\sqrt{\frac{b^4 \log n}{n h^3}}\right)$. Thus, $R_n(t)$ and its centered version $\widetilde{R}_n(t) = R_n(t) - \mathbb{E}R_n(t)$ are asymptotically dominated by the stochastic error and the smoothing bias terms.

Combining the stochastic error rate, the smoothing bias rate, and the remainder rate \eqref{eq:remainder_rate} via triangle inequality, we have the uniform convergence rate,
\begin{align}
\sup_{t \in \mathcal{T}} \left| \widehat{S}_{\mathrm{NA}}(t) - S_T(t) \right| &\le \sup_{t \in \mathcal{T}} \left| \widehat{S}_{\mathrm{NA}}(t) - \mathbb{E}\widehat{S}_{\mathrm{NA}}(t) \right| + \sup_{t \in \mathcal{T}} \left| \mathbb{E}\widehat{S}_{\mathrm{NA}}(t) - S_T(t) \right| \nonumber \\
&= O_p\left( \sqrt{ \frac{b^4 \log n}{n h^3} } \right) + O(h^r) + O_p\left( h^{2r} + \frac{b^4 \log n}{n h^3} \right) \nonumber \\
&= O_p\left( h^r + \sqrt{ \frac{b^4 \log n}{n h^3} } \right).
\label{eq:combined_rate}
\end{align}
Balancing the deterministic smoothing bias $O(h^r)$ against the stochastic term $O_p\left(\sqrt{\frac{b^4 \log n}{n h^3}}\right)$ yields the bandwidth to achieve the uniform convergence rate:
\begin{equation}
h \asymp \left( \frac{b^4 \log n}{n} \right)^{\frac{1}{2r + 3}}.
\label{eq:optimal_bandwidth}
\end{equation}
Substituting $h$ back into \eqref{eq:combined_rate} establishes the overall uniform convergence rate for Theorem~\ref{thm:main}:
\begin{equation}
\sup_{t \in \mathcal{T}} \left| \widehat{S}_{\mathrm{NA}}(t) - S_T(t) \right| = O_p\left( \left( \frac{b^4 \log n}{n} \right)^{\frac{r}{2r + 3}} \right).
\label{eq:final_optimal_rate}
\end{equation}
When the privacy budget is fixed, the scale parameter $b$ in the Laplace distribution for the perturbations remains constant as $n \to \infty$. Therefore, $\frac{b^4 \log n}{n} = O\left(\frac{\log n}{n}\right) \longrightarrow 0$ as $n \to \infty$. Hence, $\left( \frac{b^4 \log n}{n} \right)^{\frac{r}{2r + 3}} \to 0$, which implies the uniform consistency of the proposed estimator, that is, $\sup_{t \in \mathcal{T}} \left| \widehat{S}_{\mathrm{NA}}(t) - S_T(t) \right| = o_p(1)$.
\end{proof}
\section*{B: Proof of Theorem 2}
\begin{proof}
From \eqref{eq: stochastic2}, we have
$$\widehat{S}_{\text{NA}}(t) - \mathbb{E}[\widehat{S}_{\text{NA}}(t)] = \frac{1}{n} \sum_{i=1}^n \left\{ \xi(t;\mathcal{O}^*_i) - \mathbb{E}[\xi(t;\mathcal{O}^*_i)] \right\} + \tilde{R}_n(t)$$ By defining $\varsigma_n^2(t)$ as the variance of $\xi(t;\mathcal{O}^*_i)$, we have
$$\sqrt{n} \varsigma_n^{-1}(t) \left\{ \widehat{S}_{\text{NA}}(t) - \mathbb{E}\widehat{S}_{\text{NA}}(t) \right\} = \frac{1}{\sqrt{n}\varsigma_n(t)} \sum_{i=1}^n\left\{\xi(t;\mathcal{O}^*_i) - \mathbb{E}[\xi(t;\mathcal{O}^*_i)] \right\} + \frac{\sqrt{n}}{\varsigma_n(t)} \tilde{R}_n(t).$$
From \eqref{eq:remainder_rate}, we have shown that $\tilde{R}_n(t)$ is asymptotically negligible against $(\mathrm{P}_n-\mathrm{P})\xi(t;\mathcal{O}^*)$. Now we will look at whether $\frac{\sqrt{n}}{\varsigma_n(t)} \widetilde{R}_n(t)$ is asymptotically negligible. We present this scaled reminder term as
\[\frac{\sqrt{n}}{\varsigma_n(t)} \widetilde{R}_n(t) = O_p\left( \frac{\sqrt{n h^3}}{b^2} \left[ h^{2r} + \frac{b^4 \log n}{n h^3} \right] \right) = O_p\left( \frac{\sqrt{n h^{4r+3}}}{b^2} + \frac{b^2 \log n}{\sqrt{n h^3}} \right).\]
Under Assumption \ref{as:bw} $\frac{n h^3}{b^4 \log n} \to \infty$, the second term $\frac{b^2 \log n}{\sqrt{n h^3}} = \sqrt{\frac{b^4 \log^2 n}{n h^3}} = \sqrt{\log n \cdot \frac{b^4 \log n}{n h^3}} \longrightarrow 0$. With the bandwidth bounded in \eqref{eq:optimal_bandwidth},the first term $\sqrt{n h^{4r+3}} = h^r \sqrt{n h^{2r+3}} \to 0$. Therefore, $$\frac{\sqrt{n}}{\varsigma_n(t)} \widetilde{R}_n(t) = o_p(1).$$Further, we have $\Vert{}\xi(t;\mathcal{O}^*_i) - \mathbb{E}[\xi(t;\mathcal{O}^*_i)]\Vert{}_\infty \le C^*_1h^{-2}$ and $\varsigma_n^2(t)\asymp h^{-3}$ from \eqref{eq:envelope_laplace} and \eqref{eq:variance_laplace}. Therefore, by the central limit theorem,
$$\frac{1}{\sqrt{n} \varsigma_n(t)} \sum_{i=1}^n \left( \xi(t; O_i^*) - \mathbb{E}[\xi(t)] \right) \stackrel{d}{\longrightarrow} {N}(0, 1).$$With the asymptotically negligible $\tilde{R}_n(t)$, 
we establish the asymptotic normality for $\hat{S}_{\mathrm{NA}}(t)$,
\[\sqrt{n}\,\varsigma_{n}^{-1}(t)
  \left\{\widehat S_{\mathrm{NA}}(t)-\E\widehat S_{\mathrm{NA}}(t)\right\}
  \stackrel{d}{\longrightarrow} N(0,1).\]
Next, we show the consistency of variance estimator via the multiplier bootstrap procedure. Because $\E\omega=1$, the bootstrap quantities are centered at the estimators derived from the sample, for example, $\E\{\widehat g^{\omega}_{1}(u)\mid\mathcal{O}^*\}=\widehat g_{1}(u)$ and
$\E\{\widehat H^{\omega}_{Y}(u)\mid\mathcal{O}^*\}=\widehat H_{Y}(u)$ where $\widehat g_{1}(u)$ and $\widehat H_{Y}(u)$ are estimators in \eqref{eq:ghat} and \eqref{eq:Hhat}. Consequently, by a first-order functional Taylor expansion of $\widehat{S}_{\mathrm{NA}}^\omega(t)$ around the estimators $(\widehat{g}_1, \widehat{H}_Y)$, the reweighted estimator is :$$\widehat{S}_{\mathrm{NA}}^\omega(t) - \widehat{S}_{\mathrm{NA}}(t) = \frac{1}{n} \sum_{i=1}^n (\omega_i - 1) \widehat\xi(t;\mathcal{O}^*_i) + R_{n}^\omega(t),$$ where
$\widehat{\xi}(t;\mathcal{O}^*_i) = -\widehat{S}_{\mathrm{NA}}(t) \int_{\log t_0}^{\log t} \left[ \frac{\widehat{\delta}_i \mathcal{K}_h(u - Y_i^*)}{\widehat{H}_Y(u)} - \frac{\widehat{g}_1(u) \int_u^\infty \mathcal{K}_h(v - Y_i^*) \, dv}{\widehat{H}_Y^2(u)} \right] du$ and the remainder term $R_{n}^\omega(t)$ is asymptotically negligible. Because the weights $\omega_1, \dots, \omega_n$ are i.i.d. with $\mathrm{Var}(\omega_i \mid \mathcal{O}^*) = \mathrm{Var}(\omega_i) = 1$, the conditional variance of $\widehat{S}_{\mathrm{NA}}^\omega(t)$ given $\mathcal{O}^*$ simplifies to: $$\widehat{\varsigma}_n^2(t) = n \mathrm{Var}\left\{ \widehat{S}_{\mathrm{NA}}^\omega(t) \;\middle\vert{}\; \mathcal{O}^* \right\} = n \mathrm{Var}\left\{ \frac{1}{n} \sum_{i=1}^n (\omega_i - 1) \widehat\xi(t;\mathcal{O}^*_i) + R_{n}^\omega(t) \;\middle\vert{}\; \mathcal{O}^* \right\} = \frac{1}{n} \sum_{i=1}^n \widehat\xi^2(t;\mathcal{O}^*_i) + o_p(1).$$
To establish $\widehat\varsigma_n^2(t)/\varsigma_n^2(t) \stackrel{p}{\longrightarrow} 1$, we decompose:
\begin{equation*}
  \left| \frac{\widehat\varsigma_n^2(t)}{\varsigma_n^2(t)} - 1 \right| 
  \le \frac{1}{\varsigma_n^2(t)} \left| \frac{1}{n}\sum_{i=1}^n \widehat{\xi}^2(t;\mathcal{O}^*_i) - \frac{1}{n}\sum_{i=1}^n \xi^2(t;\mathcal{O}^*_i) \right| 
  + \left| \frac{n^{-1}\sum_{i=1}^n \xi^2(t;\mathcal{O}^*_i)}{\varsigma_n^2(t)} - 1 \right|.
\end{equation*}
By a first-order Taylor expansion of integrand in $\widehat{\xi}(t;\mathcal{O}^*_i)$, we have $\sup_i |\widehat{\xi}(t;\mathcal{O}^*_i) - \xi(t;\mathcal{O}^*_i)| = O_p(\rho_n h^{-\beta})$. Since $\varsigma_n^2(t) \asymp h^{-(2\beta - 1)}$, the first term satisfies $O_p(\rho_n^2 h^{-1}) = o_p(1)$ under Assumption~\ref{as:bw}.

For the second term, taking expectations gives $\E\{\xi^2(t;\mathcal{O}^*_i)\} = \varsigma_n^2(t)$, and the envelope bound $\|\xi(t;\mathcal{O}^*_i)\|_\infty \le C^* h^{-\beta}$ implies $\E\{\xi_1^4(t)\} \le {C^{*}}^2 h^{-2\beta} \varsigma_n^2(t)$. Applying Chebyshev's inequality yields
\begin{equation*}
  \Var\left( \frac{n^{-1} \sum_{i=1}^n \xi^2(t;\mathcal{O}^*_i)}{\varsigma_n^2(t)} \right) 
  \le \frac{\E[\xi^4(t;\mathcal{O}^*_i]}{n \varsigma_n^4(t)} 
  \le \frac{ {C^{*}}^2 h^{-2\beta} \varsigma_n^2(t)}{n \varsigma_n^4(t)} 
  = O\left( \frac{1}{nh} \right) \rightarrow 0.
\end{equation*}
Combining both bounds confirms that $\widehat\varsigma_n^2(t) / \varsigma_n^2(t) \stackrel{p}{\longrightarrow} 1$ as $n \to \infty$.
  \end{proof}

\end{document}